\newtheorem{theorem}{Theorem}
\newtheorem{definition}{Definition}
\definecolor{cvprblue}{rgb}{0.21,0.49,0.74}
\title{Communication-Efficient Serving for Video Diffusion Models with Latent Parallelism}
\author{
    Zhiyuan Wu$^{1}$ \quad 
    Shuai Wang$^{2}$ \quad 
    Li Chen$^{2}$ \quad 
    Kaihui Gao$^{2}$ \\
    Dan Li$^{1,2,\dagger}$ \quad 
    Yanyu Ren$^{1}$ \quad 
    Qiming Zhang$^{3}$ \quad 
    Yong Wang$^{3}$ \\
    \\
    $^{1}$Department of Computer Science and Technology, Tsinghua University \\
    $^{2}$Zhongguancun Laboratory \quad 
    $^{3}$ZTE Corporation \\
    {Email: wu-zy25@tsinghua.org.cn}
}
\begin{document}
\maketitle
\begin{abstract}
Video diffusion models (VDMs) perform attention computation over the 3D spatio-temporal domain. Compared to large language models (LLMs) processing 1D sequences, their memory consumption scales cubically, necessitating parallel serving across multiple GPUs. Traditional parallelism strategies partition the computational graph, requiring frequent high-dimensional activation transfers that create severe communication bottlenecks. To tackle this issue, we exploit the local spatio-temporal dependencies inherent in the diffusion denoising process and propose Latent Parallelism (LP), the first parallelism strategy tailored for VDM serving. \textcolor{black}{LP decomposes the global denoising problem into parallelizable sub-problems by dynamically rotating the partitioning dimensions (temporal, height, and width) within the compact latent space across diffusion timesteps, substantially reducing the communication overhead compared to prevailing parallelism strategies.} To ensure generation quality, we design a patch-aligned overlapping partition strategy that matches partition boundaries with visual patches and a position-aware latent reconstruction mechanism for smooth stitching. Experiments on three benchmarks demonstrate that LP reduces communication overhead by up to 97\% over baseline methods while maintaining comparable generation quality.
As a non-intrusive plug-in paradigm, LP can be seamlessly integrated with existing parallelism strategies, enabling efficient and scalable video generation services.
\end{abstract}

\begin{figure}[t]
    \centering
    \begin{minipage}{\linewidth}
        \centering
        \includegraphics[width=1.0\textwidth]{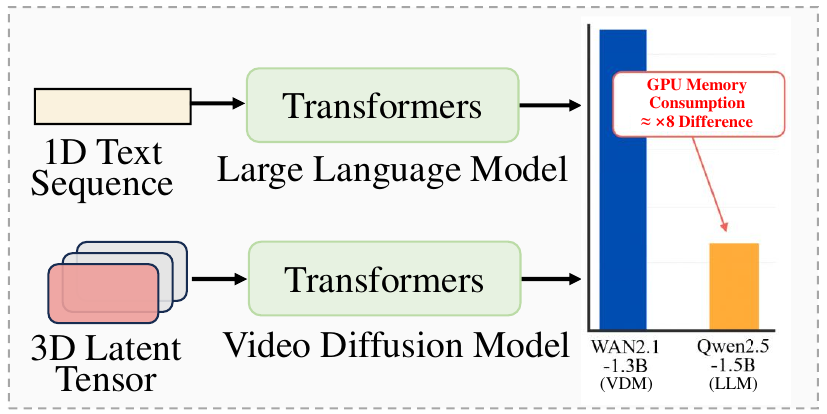}
        \vspace{-18pt}
        \caption{Comparison of VDM and LLM in terms of GPU memory consumption.}
        \label{vdm-vs-llm}
    \end{minipage}\par\medskip
\end{figure}

\section{Introduction}
\label{sec:introduction}
In recent years, video diffusion models (VDMs) have emerged as a transformative paradigm in AI-Generated Content (AIGC) \cite{xing2024survey,ma2025efficient}. State-of-the-art VDMs such as Sora \cite{brooks2024video}, Veo \cite{google2025veo}, and WAN \cite{wan2025wan} can now generate high-fidelity videos spanning thousands of frames at 1080p resolution with remarkable temporal consistency. However, this capability comes at a prohibitive cost: VDMs face extremely high GPU memory demands, fundamentally different from those of large language models (LLMs). While LLMs process 1D text sequences, VDMs perform attention computation over a 3D latent domain consisting of temporal, width, and height dimensions, leading to cubic scaling of GPU memory consumption. As illustrated in Figure \ref{vdm-vs-llm}, generating merely a 5-second, 480p video using WAN2.1-1.3B \cite{wan2025wan} requires over 30GB of memory on an A6000 GPU. This memory footprint is approximately 8× larger than that of Qwen-2.5-1.5B \cite{qwen2,qwen2-5}, making video generation with VDMs challenging on a single memory-constrained GPU. \nocite{zet-paper1}

To overcome this memory barrier, parallel serving across multiple GPUs becomes necessary for serving VDMs, yet dedicated parallelism strategies remain largely unexplored. Consequently, researchers routinely adopt conventional LLM parallelism strategies for VDMs, which exhibit significant limitations for video generation workloads \cite{wan2-1-gpu,hunyuan-gpu,yangscalefusion}.
Data parallelism (DP), which replicates the model across multiple GPUs, fails to reduce the prohibitive per-request memory footprint \cite{pytorch_dataparallel_docs}. In contrast, model parallelism \cite{brakel2024model} and its various forms partition the computational graph, distributing memory load across GPUs. Specifically, naive model parallelism (NMP) assigns consecutive model layers to different GPUs \cite{native-model-parallel}, tensor parallelism (TP) splits individual operators across GPUs at intra-layer granularity \cite{shoeybi2019megatron}, and pipeline parallelism (PP) splits the model into stages and processes multiple micro-batches concurrently to improve throughput \cite{butler2024pipeinfer}. While effective for LLMs, these strategies encounter severe communication bottlenecks in VDMs due to the dimensional gap in activation tensors. 

Unlike LLMs that process relatively compact 1D token sequences, VDMs generate massive 3D spatio-temporal activations at each denoising timestep that must be transmitted across GPUs. 
Quantitatively, deploying WAN2.1-1.3B on 4 GPUs to generate an 81-frame video results in over 90GB of total communication overhead per inference for NMP and PP, as entire high-dimensional activation tensors are transferred between the boundaries of diffusion transformer (DiT) blocks. While TP reduces per-operation communication by partitioning along the intermediate tensor dimension, it requires frequent collective operations within DiT blocks, leading to cumulative network traffic that remains substantial across the entire denoising process.
This issue is exacerbated in real-world deployments where cost-effective PCIe interconnects are widely used, despite offering only a fraction of the bandwidth provided by NVLink \cite{li2019evaluating,hu2025demystifying}. In such bandwidth-limited service clusters, cross-GPU communication becomes the critical bottleneck that fundamentally impedes the practical deployment of VDM serving systems. Therefore, designing a communication-efficient parallelism strategy for serving VDMs without compromising generation quality remains an open challenge.

To address this challenge, we re-examine parallelism strategies from first principles, considering the intrinsic characteristics of VDMs. \textcolor{black}{Unlike prevailing strategies (NMP, TP, PP) that partition the computational graph and inevitably transmit high-dimensional activation tensors across GPUs, we propose a new perspective that shifts the target of parallelization: applying divide-and-conquer to the latent tensor of individual requests rather than partitioning the model.} This paradigm shift is enabled by two key properties of video generation. 
First, diffusion denoising at each spatio-temporal location primarily depends on its local neighborhood, allowing the 3D latent space to be partitioned into sub-regions that can be denoised independently across GPUs before being reconstructed.
Second, the latent tensor is relatively lightweight, being an order of magnitude smaller than model parameters and intermediate activations. 
These properties enable a communication-efficient parallelism paradigm for VDMs: instead of repeatedly transferring massive activations across DiT block boundaries, latent-domain parallelization partitions the spatio-temporal space and only exchanges compact representations, drastically reducing communication overhead. \nocite{zet-paper2}

In this paper, we propose Latent Parallelism (LP), a parallel serving strategy specifically designed for VDMs. LP employs a dynamic rotating partition method that alternately decomposes the video latent space along temporal, height, and width dimensions across different denoising timesteps, transforming the global denoising task into parallelizable local sub-problems across GPUs. \textcolor{black}{This design allows each local region to access the full spatio-temporal context of the entire video through rotation cycles, thereby maintaining global consistency throughout the diffusion process.}
To preserve generation quality, we develop two complementary techniques. First, we propose a patch-aligned overlapping partition strategy that matches partition boundaries with the internal visual patches of VDMs, preserving latent space feature integrity while ensuring high-quality local denoising. Second, we design a position-aware latent reconstruction mechanism that adaptively weights overlapping regions based on their positions, enabling seamless stitching and eliminating boundary artifacts.
\textit{\textbf{To the best of our knowledge, LP is the first parallelism strategy tailored for VDM serving.}} As a non-intrusive paradigm, LP can be seamlessly integrated with existing parallelism strategies (DP, NMP, TP, PP) and feature cache reuse methods \cite{liu2025timestep,ma2025model,liu2025reusing,ma2025magcache}, making it a promising plug-in component for efficient VDM serving systems.

The contributions of this paper are as follows:

\begin{itemize}
\item We propose LP, the first parallelism strategy tailored for VDM serving that dynamically partitions the latent space across GPUs, achieving communication-efficient parallel serving while maintaining generation quality.

\item We design two complementary techniques for LP to preserve generation quality: patch-aligned overlapping partitioning that matches boundaries with the internal visual patches of VDM, and position-aware latent reconstruction that eliminates boundary artifacts during stitching.

\item We provide theoretical analysis on the feasibility of LP in ensuring video quality and characterize its communication overhead, demonstrating its significant advantages over conventional parallelism strategies.

\item We conduct experiments on EvalCrafter \cite{liu2024evalcrafter}, T2V-CompBench \cite{sun2025t2v}, and VBench \cite{huang2024vbench} benchmarks. Results show that LP reduces communication overhead by up to 97\% over conventional parallelism strategies while maintaining comparable video generation quality.
\end{itemize}

\section{Related Work}
\subsection{Video Diffusion Model}
Video diffusion models have evolved from pixel-space to latent-space generation. Early approaches inflated 2D U-Nets with temporal attention layers \cite{blattmann2023stable,chen2023videocrafter1}, with VideoCrafter \cite{chen2023videocrafter1} employing dual cross-attention over full CLIP patch tokens for content preservation. Recent research adopts the scalable DiT blocks \cite{peebles2023scalable} as the VDM backbone. These works, exemplified by Sora \cite{brooks2024video}, WAN \cite{wan2025wan}, and HunyuanVideo \cite{kong2024hunyuanvideo}, have demonstrated significant improvements in video generation quality and temporal coherence through transformer-based architectures \cite{brooks2024video,wan2025wan,kong2024hunyuanvideo}. More recent VDMs like WAN2.2 \cite{Wan2-2} have incorporated mixture-of-experts (MoE) architecture, dividing denoising stages between specialized experts to increase capacity at constant computational cost. However, these methods jointly process spatio-temporal dimensions throughout the video, causing cubic scaling of the attention matrix and excessive GPU memory usage.

\nocite{zet-paper3,servellm}
\subsection{VDM Inference Acceleration}
Recent research accelerates VDM inference via two primary approaches: dynamic feature caching and inference system optimization.
Specifically, dynamic feature caching methods reduce redundant computations by adaptively reusing intermediate outputs \cite{liu2025timestep,ma2025model,liu2025reusing,ma2025magcache,zhou2025less}. 
As an early approach, TeaCache \cite{liu2025timestep} leverages timestep embeddings to estimate output differences as a reuse criterion, but requires extensive prompt-specific calibration. More recent works leverage robust, runtime-adaptive criteria derived from internal model dynamics, such as the unified magnitude law \cite{ma2025magcache} or the transformation rate \cite{zhou2025less}. A semantic-aware alternative, ProfilingDiT \cite{ma2025model}, selectively caches static background blocks while preserving full computation for dynamic elements. Another line of research focuses on inference system optimization. ScaleFusion \cite{yangscalefusion}, as a representative, optimizes DiT inference by overlapping computation and cross-machine communication through intra- and inter-layer scheduling. However, these methods either optimize the inference process of VDMs or adhere to existing parallelism strategies, leaving novel parallelism perspectives that could reshape the inference paradigm of VDM largely unexplored.

\begin{figure}[t]
    \centering
    \begin{minipage}{\linewidth}
        \centering
        \includegraphics[width=1.0\textwidth]{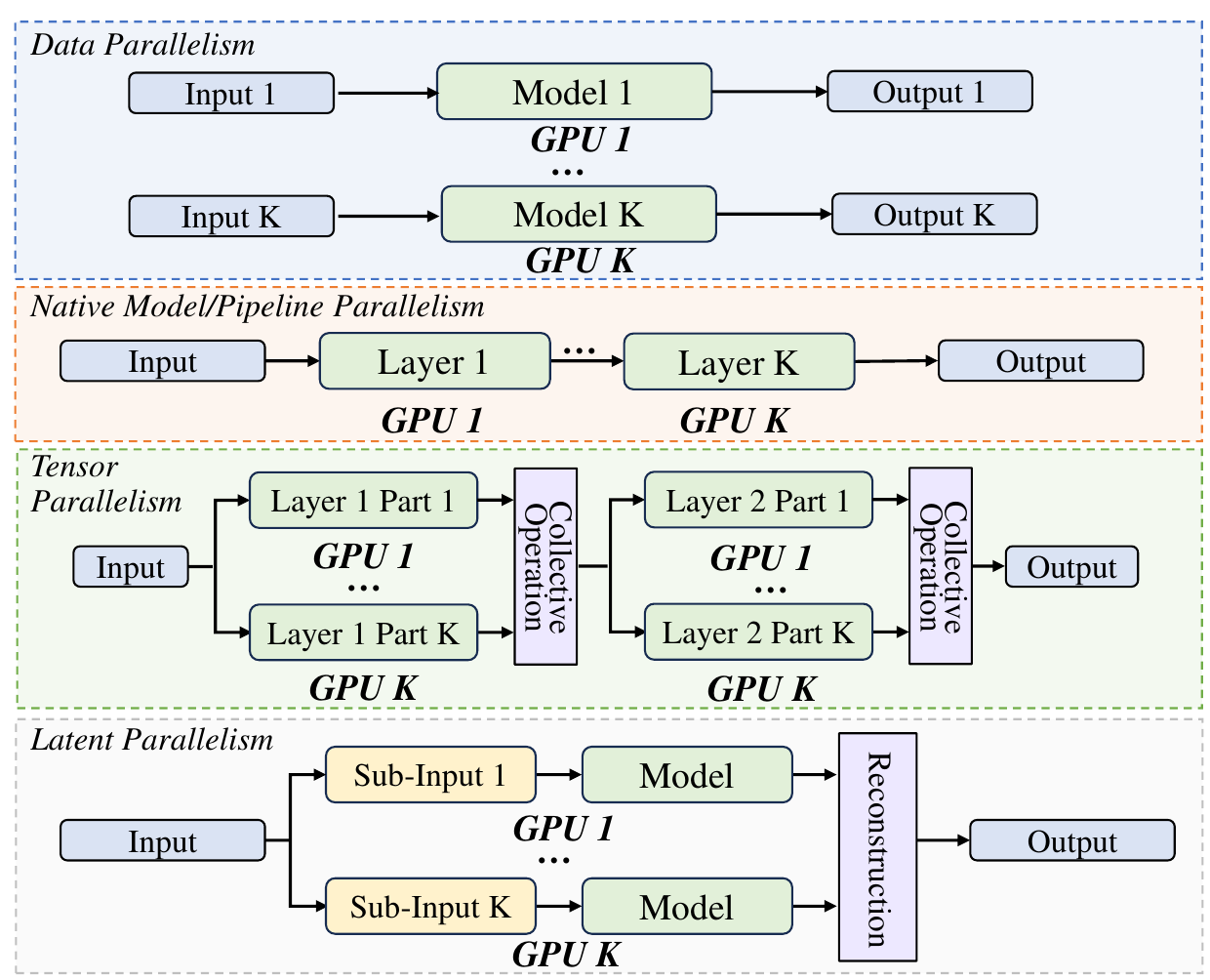}
        \vspace{-18pt}
        \caption{Comparison of different parallelism strategies.}
        \label{parallel-serving}
    \end{minipage}\par\medskip
\end{figure}

\begin{figure*}[t]
    \centering
    \begin{minipage}{\linewidth}
        \centering
        \includegraphics[width=0.95\textwidth]{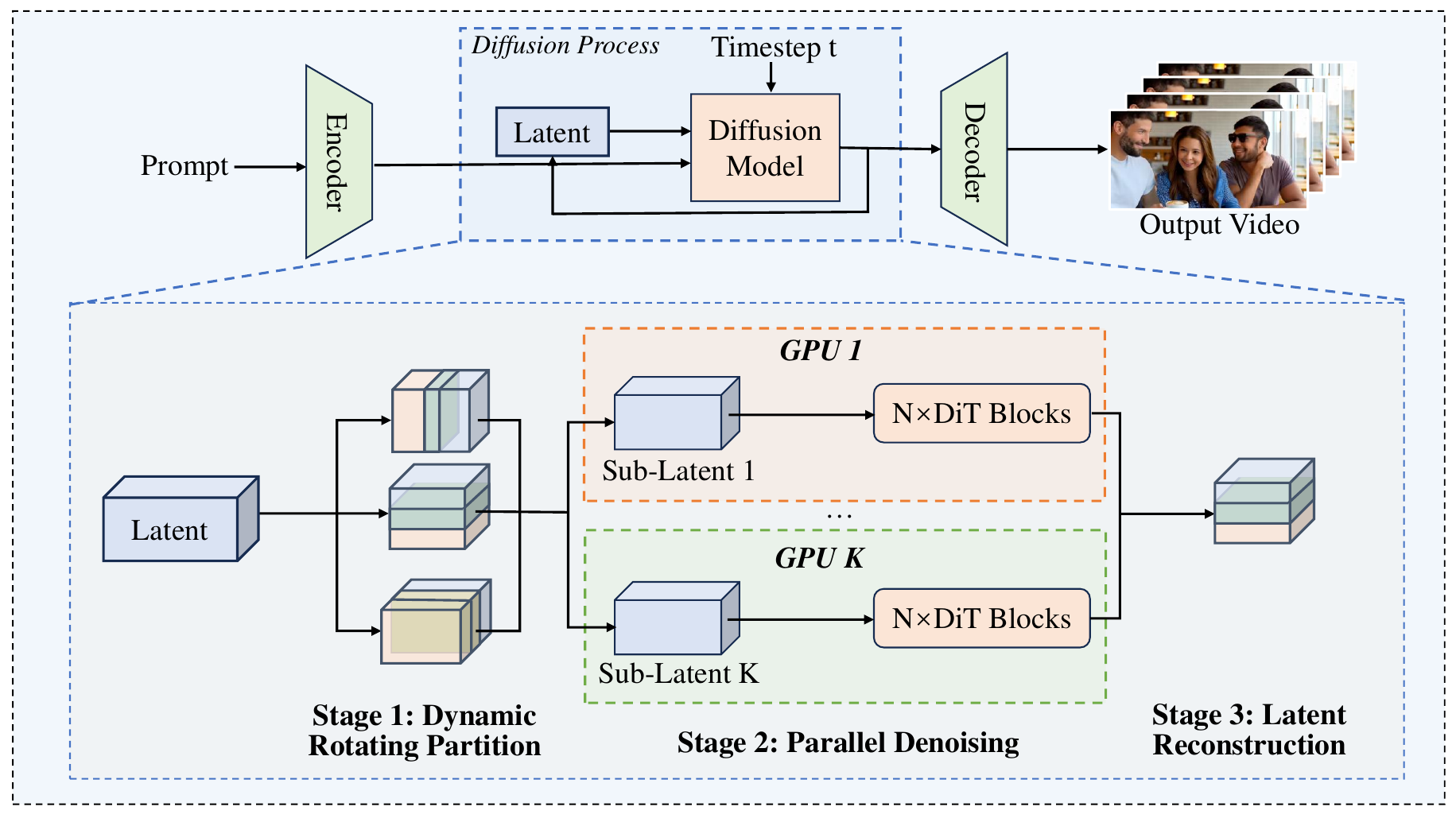}
        \vspace{-10pt}
        \caption{Workflow of LP. At each denoising timestep $t$, the global latent is decomposed through (1) dynamic rotating partition, processed via (2) parallel denoising on independent GPUs, and finally unified by (3) latent reconstruction.}
        \label{framework}
    \end{minipage}\par\medskip
\end{figure*}

\section{Method}
\label{sec:method}

\subsection{Preliminaries}
\label{subsec:preliminaries}
VDMs generate high-quality videos from random noise through an iterative denoising process. Given an encoded text prompt $c$, the process begins with a randomly sampled latent tensor $z_T$ and progressively denoises it over $T$ timesteps to obtain the clean latent $z_0$, which is subsequently decoded into the final video. The denoising process at each timestep can be formalized as:
\begin{equation}
z_{t-1} = \mathcal{S}(z_{t}, \tilde{f}(z_{t}, t, c), t),
\end{equation}
where $t \in \{T,T-1,\ldots,1\}$ is the diffusion timestep, and $z_t$ represents the noisy latent tensor at timestep $t$. $\tilde{f}(\cdot)$ denotes the guided denoising network, and $\mathcal{S}(\cdot)$ is the sampling scheduler that computes the less noisy latent tensor $z_{t-1}$ from $z_t$ and the predicted noise $\tilde{f}(z_t, t, c)$. Typically, VDMs adopt the classifier-free guidance (CFG) \cite{ho2022classifier} framework, which operates by linearly combining conditional and unconditional noise predictions:
\begin{equation}
\tilde{f}(z_t, t, c) = f(z_t, t, \emptyset) + w \cdot (f(z_t, t, c) - f(z_t, t, \emptyset)),
\end{equation}
where $f(\cdot)$ denotes the denoising network with DiT backbone, $f(z_t, t, c)$ is the conditional prediction based on encoded prompt $c$, $f(z_t, t, \emptyset)$ is the unconditional prediction based on the embedding of the null prompt $\emptyset$, and $w$ is the guidance scale hyperparameter. \textcolor{black}{Internally, $f(\cdot)$ divides the latent tensor into non-overlapping spatio-temporal patches, which serve as the atomic processing units in DiT blocks.}

Operating on latent tensor $z_t$ across temporal, height, and width dimensions, the denoising network $f(\cdot)$ incurs prohibitive GPU memory demands that necessitate parallel serving across multiple GPUs. However, conventional parallelism strategies (NMP, TP, PP) introduce severe communication bottlenecks in this iterative denoising process. At each of the $T$ timesteps, massive intermediate activations must be transferred across GPUs, resulting in cumulative communication overhead that limits serving efficiency.

\subsection{Latent Parallelism}
\label{subsec:stp}
To reduce communication overhead in parallel serving of VDMs, we propose Latent Parallelism (LP). As shown in Figure \ref{parallel-serving}, LP distinguishes itself from conventional parallelism strategies by operating on the latent space of individual requests rather than partitioning the computational graph. The core idea is to decompose the global denoising problem by dynamically partitioning the latent tensor along different dimensions (temporal, height, width) at each timestep, with each partition (also called sub-latent) being processed in parallel on a different GPU. Figure \ref{framework} presents the workflow of LP, which consists of three stages: dynamic rotating partition, parallel denoising, and latent reconstruction.

\noindent
\textbf{Dynamic rotating partition.} To prevent visual discontinuity from incomplete spatio-temporal context, LP rotates the partitioning dimension at each timestep. For the $i$-th forward propagation of DiT blocks \textcolor{black}{(corresponding to diffusion timestep $t_i$, where $t_i=T+1-i$, counting from the initial noisy state),} the partitioning dimension $d_i$ is determined as follows:
\begin{equation}
d_i = \mathcal{M}[(i-1) \bmod 3 + 1],
\end{equation}
where $\mathcal{M}(\cdot)$ maps indices 1, 2, 3 to temporal, height, and width dimensions, respectively. \textcolor{black}{This rotating partition strategy ensures that each partitioned region captures the complete video context across all dimensions throughout the rotation cycles, maintaining global video consistency at the macroscopic level.}

\noindent
\textbf{Parallel denoising.} Given the current latent tensor $z_t$, forward propagation step $i$ with partitioning dimension $d_i$, LP partitions $z_t$ along $d_i$ into $K$ sub-regions $\{z_t^{(k)}\}_{k=1}^K$, where each sub-region undergoes parallel denoising computation on a different GPU:
\begin{equation}
\resizebox{0.48\textwidth}{!}{$\displaystyle
\tilde{f}_{\text{FP}}(z_t^{(k)}, t, c) = f(z_t^{(k)}, t, \emptyset) + w \cdot (f(z_t^{(k)}, t, c) - f(z_t^{(k)}, t, \emptyset)),
$}
\end{equation}
where $K$ corresponds to the number of available GPUs. Because the parallel denoising process operates on compact latent tensors instead of massive intermediate activations, LP significantly reduces inter-GPU communication overhead.
It is also worth noting that $f(\cdot)$ can be further parallelized using conventional strategies (NMP, TP, PP), allowing for hybrid inference acceleration.

\noindent
\textbf{Latent reconstruction.} 
After obtaining the noise predictions for all sub-regions, LP reconstructs them to form a complete latent for the current timestep. Specifically, the noise prediction at timestep $t$ is constructed via the following function $\mathcal{F}(\cdot)$:
\begin{equation}
\tilde{f}_{\text{FP}}(z_t, t, c) = \mathcal{F}\left(\{\tilde{f}_{\text{FP}}(z_t^{(k)}, t, c)\}_{k=1}^{K}, d_i\right).
\end{equation}
Then, the sampling scheduler $\mathcal{S}$ updates the latent tensor based on the reconstructed prediction:
\begin{equation}
z_{t-1} = \mathcal{S}\left(z_t, \tilde{f}_{\text{FP}}(z_t, t, c), t\right).
\end{equation}

These aforementioned stages are executed iteratively across all $T$ timesteps, ultimately generating the denoised latent $z_0$ from $z_T$ for decoding into video.
In the following sections, we will elaborate on the partitioning function $\mathcal{P}_{d_i}(\cdot)$ from $z_t \rightarrow \{z_t^{(k)}\}_{k=1}^K$ and the specific form of $\mathcal{F}(\cdot)$.

\subsection{Patch-Aligned Overlapping Partition}
\label{subsec:partition}
A critical challenge in partitioning the latent space is respecting the architectural constraints imposed by the internal structure of VDMs. Modern VDMs process latent tensors by dividing them into visual patches, which are spatio-temporal units that serve as the processing granularity of DiT blocks. Naive partitioning that cuts through patch boundaries may disrupt feature integrity within the denoising network and lead to visible artifacts. To address this challenge, we propose a patch-aligned overlapping partitioning mechanism that operates in patch space and incorporates overlapping regions to prevent boundary discontinuities.

\noindent
\textbf{Patch-aligned core region partition.} Let $(p_T, p_H, p_W)$ denote the patch sizes along the temporal, height, and width dimensions, respectively. For the current partitioning dimension $d_i$, we denote the patch size along this dimension as $p_{d_i}$. Given the latent tensor size $D_{d_i}$ along dimension $d_i$, the total number of patches along this dimension is $N_{d_i} = \left\lfloor \frac{D_{d_i}}{p_{d_i}} \right\rfloor$.
To distribute partitions across $K$ GPUs, we define $L = \left\lceil \frac{N_{d_i}}{K} \right\rceil$ as the number of patches along $d_i$ allocated to each GPU's core region (the non-overlapping portion of each partition). For the $k$-th partition where $k \in \{1, 2, \ldots, K\}$, the boundaries of its core region in patch space are as follows:
\begin{equation}
\alpha_k = (k-1) \cdot L, \quad \beta_k = \alpha_k + L
\label{partition-1}
\end{equation}
where $\alpha_k$ and $\beta_k$ represent the starting and ending patch indices of the core region, respectively.

\noindent
\textbf{Overlapping region extension.} As hard boundaries between adjacent partitions can lead to visible discontinuities in the generated video, we introduce controlled overlapping regions between partitions.
Specifically, the size of overlapping region is determined by the hyperparameter $r \in [0, K-1]$, which represents the ratio of overlapping patches to core patches in each partition. The number of overlapping patches is $O = \lfloor L \cdot r \rfloor$, and the extended boundaries $[\alpha_k', \beta_k')$ of the $k$-th partition are then:
\begin{equation}
\alpha_k' = \max(0, \alpha_k - O), \quad \beta_k' = \min(N_{d_i}, \beta_k + O),
\label{partition-2}
\end{equation}
where the $\max$ and $\min$ operations ensure that partition boundaries do not exceed the valid patch range $[0, N_{d_i})$.

\noindent
\textbf{Mapping back to latent space.}
To obtain the actual partition indices in the latent space, we map the patch space boundaries $[\alpha_k', \beta_k')$ back to the original latent space $[s_k, e_k)$ by scaling with the patch size:
\begin{equation}
s_k = \alpha_k' \cdot p_{d_i}, \quad e_k = \beta_k' \cdot p_{d_i}.
\label{partition-3}
\end{equation}
Therefore, the complete patch-aligned overlapping partition strategy is defined by the following mapping function:
\begin{equation}
\begin{aligned}
 \mathcal{P}_{d_i}(z_t, K, r) 
 = \Big\{ z_t^{(k)} = z_t[\mathcal{R}_k^{d_i}] \,\mid\, 
 \mathcal{R}_k^{d_i} = [s_k, e_k), \\
 k \in \{1, 2, \ldots, K\} \Big\},
\end{aligned}
\label{partition-4}
\end{equation}
where $z_t[\mathcal{R}_k^{d_i}]$ denotes extracting the sub-tensor with index range $\mathcal{R}_k^{d_i} = [s_k, e_k)$ along dimension $d_i$.

\subsection{Position-Aware Latent Reconstruction}
\label{subsec:fusion}
After parallel denoising across GPUs, we obtain $K$ local noise predictions $\{\tilde{f}_{\text{FP}}(z_t^{(k)}, t, c)\}_{k=1}^K$ that must be reconstructed into a complete global latent. Due to the overlapping regions introduced in section \ref{subsec:partition}, each overlapping position is predicted by multiple partitions, making direct concatenation infeasible. We address this through position-aware latent reconstruction, which seamlessly stitches these local predictions by assigning adaptive weights based on each position's spatio-temporal proximity to the core regions of the partitions.

\noindent
\textbf{Position-aware weight construction.} For the noise prediction $\tilde{f}_{\text{FP}}(z_t^{(k)}, t, c)$ of the $k$-th partition, we recall that its spatio-temporal extent along dimension $d_i$ is $\mathcal{R}_k^{d_i} = [s_k, e_k)$ with length $\ell_k = e_k - s_k$, and construct a weight mask $W^{(k)}$ with the same spatio-temporal dimensions as $\tilde{f}_{\text{FP}}(z_t^{(k)}, t, c)$.
Given the core region $[\alpha_k \cdot p_{d_i}, \beta_k \cdot p_{d_i})$ and the actual partition extent $[s_k, e_k)$ for GPU $k$, the distances from the partition boundaries to the core region boundaries are as follows:
\begin{equation}
\begin{aligned}
\Delta_k^{\text{start}} &= \alpha_k \cdot p_{d_i} - s_k = (\alpha_k - \alpha_k') \cdot p_{d_i}, \\
\Delta_k^{\text{end}} &= e_k - \beta_k \cdot p_{d_i} = (\beta_k' - \beta_k) \cdot p_{d_i},
\end{aligned}
\label{reconstruct-1}
\end{equation}
where $\Delta_k^{\text{start}}$ and $\Delta_k^{\text{end}}$ represent the lengths of the front and rear overlapping regions, respectively.
To achieve smooth reconstruction in overlapping regions, we adopt a local coordinate perspective and employ linear weights increase from 0 to 1 in the front overlapping region $[0, \Delta_k^{\text{start}})$, remain constant at 1 in the core region $[\Delta_k^{\text{start}}, \ell_k - \Delta_k^{\text{end}})$, and linearly decay from 1 to 0 in the rear overlapping region $[\ell_k - \Delta_k^{\text{end}}, \ell_k)$. Formally, for each local position $j \in [0, \ell_k)$ along dimension $d_i$, the weight is computed as:
\begin{equation}
W^{(k)}_j = \begin{cases}
\frac{j}{\Delta_k^{\text{start}}}, & 0 \leq j < \Delta_k^{\text{start}} ,\\[6pt]
1, & \Delta_k^{\text{start}} \leq j < \ell_k - \Delta_k^{\text{end}} ,\\[6pt]
\frac{\ell_k - j}{\Delta_k^{\text{end}}}, & \ell_k - \Delta_k^{\text{end}} \leq j < \ell_k,
\end{cases}
\label{reconstruct-2}
\end{equation}
which ensures that predictions from partitions farther from the core regions contribute less to the final reconstructed latent, and vice versa.

\noindent
\textbf{Coordinate localization.} To apply these partition-specific weights during reconstruction, we establish the mapping between global and local coordinate systems $\pi_k(\cdot)$ that converts a global coordinate $x$ along dimension $d_i$ to the corresponding local coordinate within the $k$-th partition:
\begin{equation}
\pi_k(x) = x - s_k, \quad \text{for } x \in [s_k, e_k).
\label{reconstruct-3}
\end{equation}
Additionally, we introduce the indicator function $\mathcal{I}_k(x)$ to denote whether a global position $x$ falls within the spatial extent of the $k$-th partition:
\begin{equation}
\mathcal{I}_k(x) =
\begin{cases}
1, & \text{if } s_k \leq x < e_k, \\
0, & \text{otherwise}.
\end{cases}
\label{reconstruct-4}
\end{equation}
Together, these functions enable us to bridge the gap between the global latent space and the local spaces of individual partitions.

\noindent
\textbf{Position-wise weighted averaging.} With the weight masks $\{W^{(k)}\}_{k=1}^K$ and coordinate mappings established, the reconstruction function $\mathcal{F}(\cdot)$ computes the all-global noise prediction through position-wise weighted averaging. For each global position $x$ along dimension $d_i$, we first accumulate the weighted predictions as follows:
\begin{equation}
A_{d_i}(x) = \sum_{k=1}^{K} \mathcal{I}_k(x) \cdot W^{(k)}_{\pi_k(x)} \cdot \left[\tilde{f}_{\text{FP}}(z_t^{(k)}, t, c)\right]_{\pi_k(x)},
\label{reconstruct-4-5}
\end{equation}
where $\left[\tilde{f}_{\text{FP}}(z_t^{(k)}, t, c)\right]_{\pi_k(x)}$ denotes the noise prediction tensor of the $k$-th partition at its local coordinate $\pi_k(x)$. Simultaneously, we compute the sum of weights from all contributing partitions:
\begin{equation}
Z_{d_i}(x) = \sum_{k=1}^{K} \mathcal{I}_k(x) \cdot W^{(k)}_{\pi_k(x)}.
\label{reconstruct-5}
\end{equation}
To this end, the reconstructed noise prediction at position $x$ is obtained by normalizing the weighted accumulation:
\begin{equation}
\left[\mathcal{F}\left(\{\tilde{f}_{\text{FP}}(z_t^{(k)}, t, c)\}_{k=1}^{K}, d_i\right)\right]_x = \frac{A_{d_i}(x)}{Z_{d_i}(x)},
\label{reconstruct-6}
\end{equation}
which is subsequently passed to the sampling scheduler, yielding the global latent.

\section{Theoretical Analysis for LP}
We provide theoretical analysis for LP from two fundamental perspectives: (1) Communication overhead: we quantitatively characterize the substantial reduction in communication achieved by LP compared to conventional parallelism strategies, demonstrating its superiority through dependence on compact latent tensors rather than high-dimensional activations for inter-GPU transfers. (2) Generation quality: we establish the theoretical foundation for how LP's dynamic rotating partition mechanism ensures that each sub-region eventually captures a complete spatio-temporal context by alternating decomposition across temporal, height, and width dimensions.
Due to page limitations, detailed proofs and derivations are provided in the supplementary material.

\section{Experiments}
\label{sec:experiments}

\subsection{Experimental Setup}
\label{subsec:setup}

\noindent
\textbf{Hardware platform.} Experiments are conducted on a physical cluster with 4 NVIDIA RTX A6000 GPUs interconnected via PCIe bus, equipped with 80 Intel(R) Xeon(R) Gold 5218R CPUs and 280GB of RAM.

\noindent
\textbf{Benchmarks.} We evaluate LP on three widely-used benchmarks: EvalCrafter \cite{liu2024evalcrafter}, T2V-CompBench \cite{sun2025t2v}, and VBench \cite{huang2024vbench}. Due to the substantial scale of datasets, we select representative subsets using a fixed random seed to ensure reproducibility.

\noindent
\textbf{Model configuration.} We employ WAN2.1-1.3B \cite{wan2025wan} as the video diffusion model, featuring a denoising network with 30 DiT Blocks, T5 text encoder, and pre-trained VAE decoder.

\noindent
\textbf{Baselines.} We compare LP against the following parallelism strategies for distributed VDM serving, implementing on PyTorch Distributed \cite{pytorch_distributed_docs} and xFusers \cite{fang2024xdit}.
\begin{itemize}
\item \textit{Naive Model Parallelism (NMP).} We evenly distribute DiT Blocks across GPUs in consecutive groups, with activations flowing sequentially through GPUs during inference.
\item \textit{Pipeline Parallelism (PP).} We apply the same model partition strategy as NMP, and exploit the conditional and unconditional forward passes of CFG to create a batch of size 2, enabling concurrent execution across stages and improving GPU utilization.
\item \textit{Hybrid Parallelism (HP).} We adopt the open-source inference framework developed by the Wan Team \cite{wan2-1-code}, which applys FSDP with xDiT that split individual operators at an intra-layer granularity to parallelize computation across multiple GPUs.
\end{itemize}

\noindent
\textbf{Metrics.} We employ two primary evaluation metrics: GPU communication overhead and video generation quality. For video generation quality assessment, we utilize VBench \cite{huang2024vbench} as our evaluation framework, focusing on five key sub-metrics: subject consistency (SC), background consistency (BC), temporal flickering (TF), motion smoothness (MS), and imaging quality (IQ). 
To contextualize the quality performance of parallelism strategies, we include two reference points: (1) Centralized, which generates videos on a single GPU without parallelization, and (2) VideoCrafter \cite{chen2023videocrafter1}, whose performance is reported in VBench.

\noindent
\textbf{Implementation details.} 
\textcolor{black}{Among the evaluated parallelism strategies, GPU 1 serves as the master orchestrator, executing its assigned DiT modules alongside encoder/decoder forward passes, sampler updates, and inter-GPU coordination at each denoising step.} Furthermore, all experiments adopt unified hyperparameter configurations: by default, 4 GPUs are used for parallelism, video frame rate is set to 16 FPS, resolution is 480p, and the denoising process adopts 60 iterations. To provide a comprehensive analysis, we evaluate both communication overhead and visual quality for videos of 49 and 81 frames, which correspond to 3-second and 5-second videos, respectively. For the LP method, we consider two cases with $r$ set to 0.5 and 1.0.

\begin{table}[t]
\centering
\caption{Communication overhead (MB) comparison across parallelism strategies.}
\vspace{-8pt}
\setlength{\tabcolsep}{2pt}
\begin{adjustbox}{width=\columnwidth}
\begin{tabular}{l|ccccc}
\hline
\multicolumn{1}{c|}{\multirow{2}{*}{\textbf{Method}}} & \multicolumn{5}{c}{\textbf{49 Frames, 480p, 3s}}                                         \\
\multicolumn{1}{c|}{}                                 & \textbf{GPU 1}   & \textbf{GPU 2}   & \textbf{GPU 3}   & \textbf{GPU 4}   & \textbf{Total}   \\ \hline
NMP                                        & 14563.59        & 14563.60        & 14563.60        & 14259.38        & 57950.17         \\
PP                                     & 14452.03        & 14439.38        & 14439.38        & 14259.38        & 57590.16         \\
HP                                       & 2084.44         & 891.21          & 891.21          & 891.21          & 4758.08          \\
LP ($r$=1.0)                          & 623.59          & 489.79          & 430.88          & 267.62          & 1811.88          \\
LP  ($r$=0.5)                          & \textbf{540.06} & \textbf{322.71} & \textbf{307.48} & \textbf{184.08} & \textbf{1354.34} \\ \hline
\multicolumn{1}{c|}{\multirow{2}{*}{\textbf{Method}}} & \multicolumn{5}{c}{\textbf{81 Frames, 480p, 5s}}                                         \\
\multicolumn{1}{c|}{}                                 & \textbf{GPU 1}   & \textbf{GPU 2}   & \textbf{GPU 3}   & \textbf{GPU 4}   & \textbf{Total}   \\ \hline
NMP                                        & 23338.59        & 23338.60        & 23338.60        & 23034.38        & 93050.17         \\
PP                                     & 23227.03        & 23214.38        & 23214.38        & 23034.38        & 92690.16         \\
HP                                       & 3367.18         & 1439.65         & 1439.65         & 1439.65         & 7686.12          \\
LP ($r$=1.0)                          & 993.28          & 770.10          & 703.07          & 446.37          & 2912.81          \\
LP ($r$=0.5)                          & \textbf{861.86} & \textbf{507.25} & \textbf{507.25} & \textbf{314.94} & \textbf{2191.29} \\ \hline
\end{tabular}
\end{adjustbox}
\label{comm-table}
\end{table}

\begin{figure}[t]  
    \centering
    \includegraphics[width=0.5\textwidth]{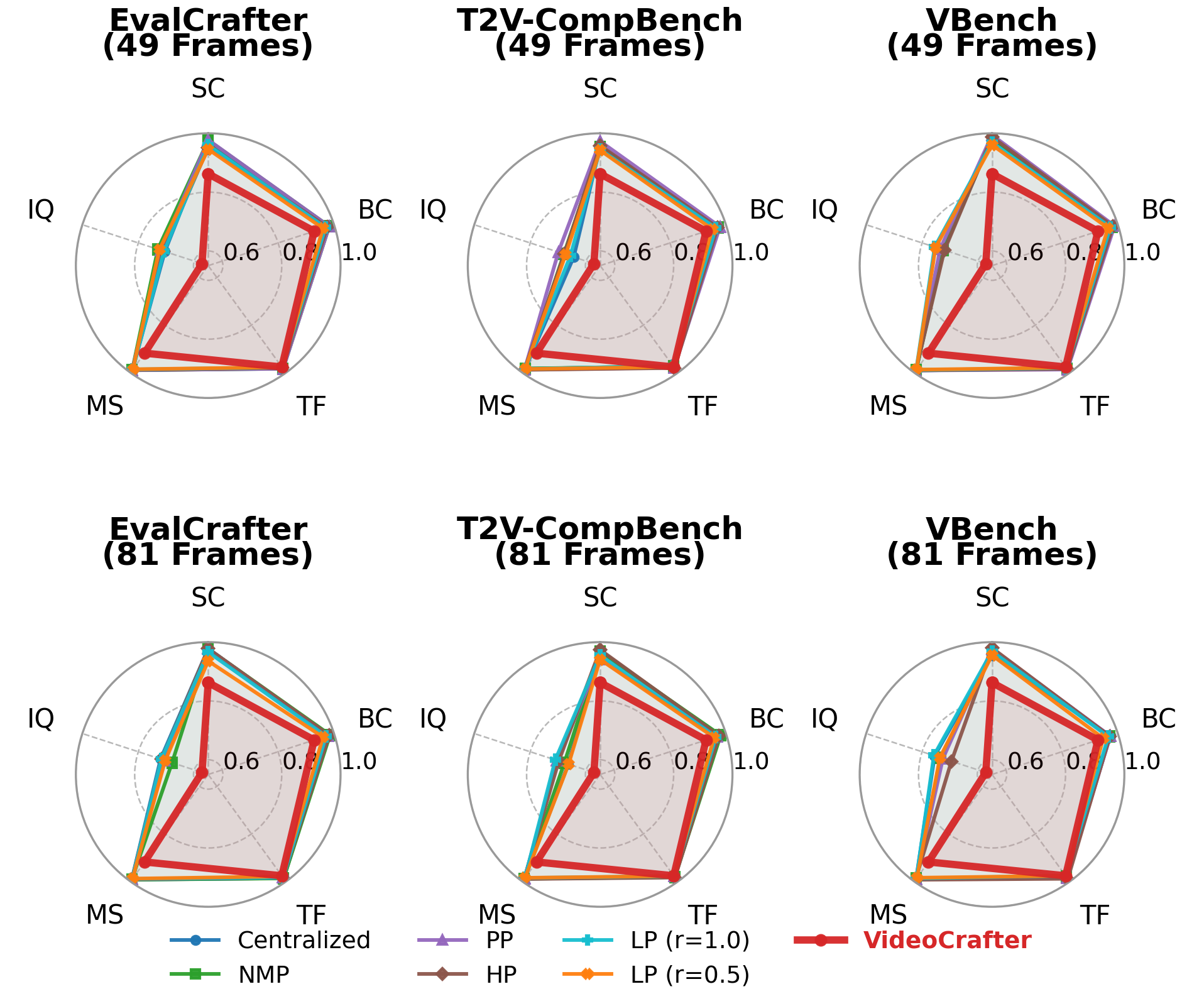}
    \vspace{-18pt}
    \caption{Comparison of video generation quality across three benchmarks.
}
    \label{radar}

\vspace{10pt}

    \centering
    \begin{minipage}{\linewidth}
        \centering
        \includegraphics[width=1.0\textwidth]{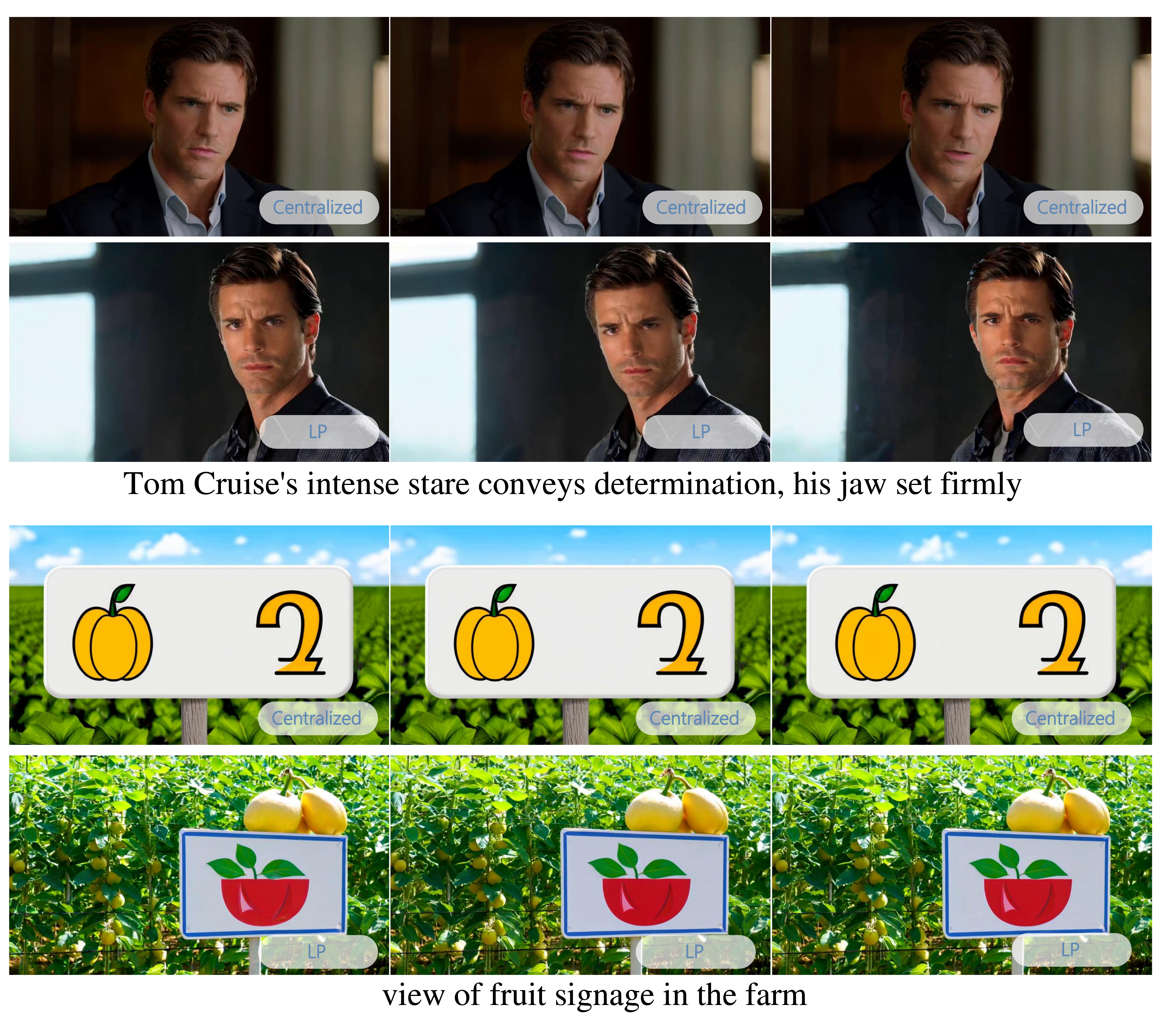}
        \vspace{-18pt}
        \caption{Visual comparison between Centralized and LP-generated videos. The left, middle, and right columns show the starting, middle, and ending frames of the videos, respectively.}
        \label{visualize}
    \end{minipage}\par\medskip
\end{figure}

\begin{figure*}[htbp]
    \centering 

    \begin{minipage}[t]{0.26\textwidth}
        \centering
        \includegraphics[width=\linewidth]{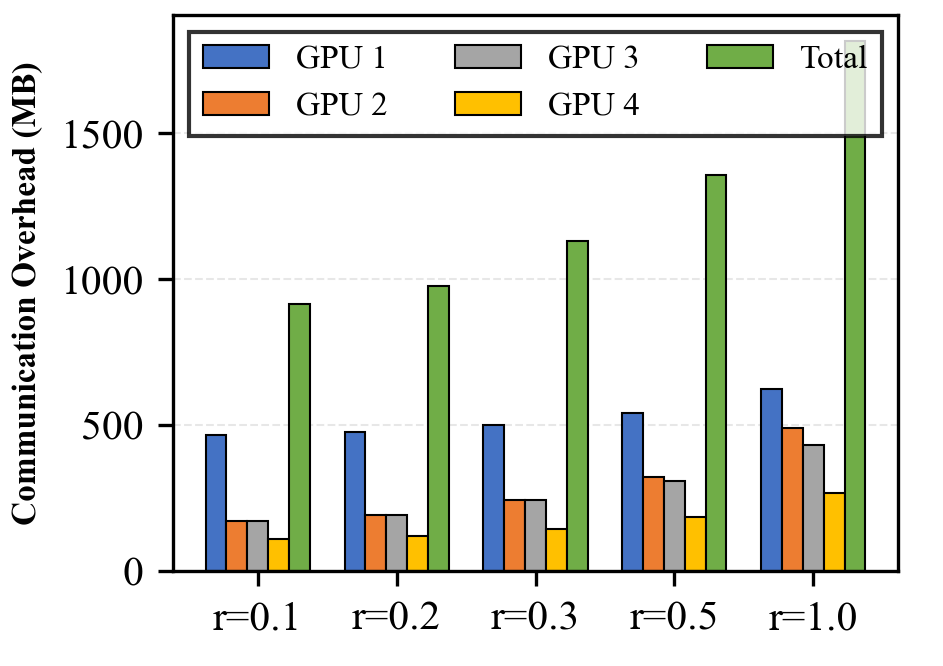}
        \vspace{-18pt}
        \captionof{figure}{Impact of overlap ratio on communication overhead.}
        \label{ablation-r-comm}
    \end{minipage}
    \hfill 
    %
    \begin{minipage}[t]{0.24\textwidth}
        \centering
        \includegraphics[width=\linewidth]{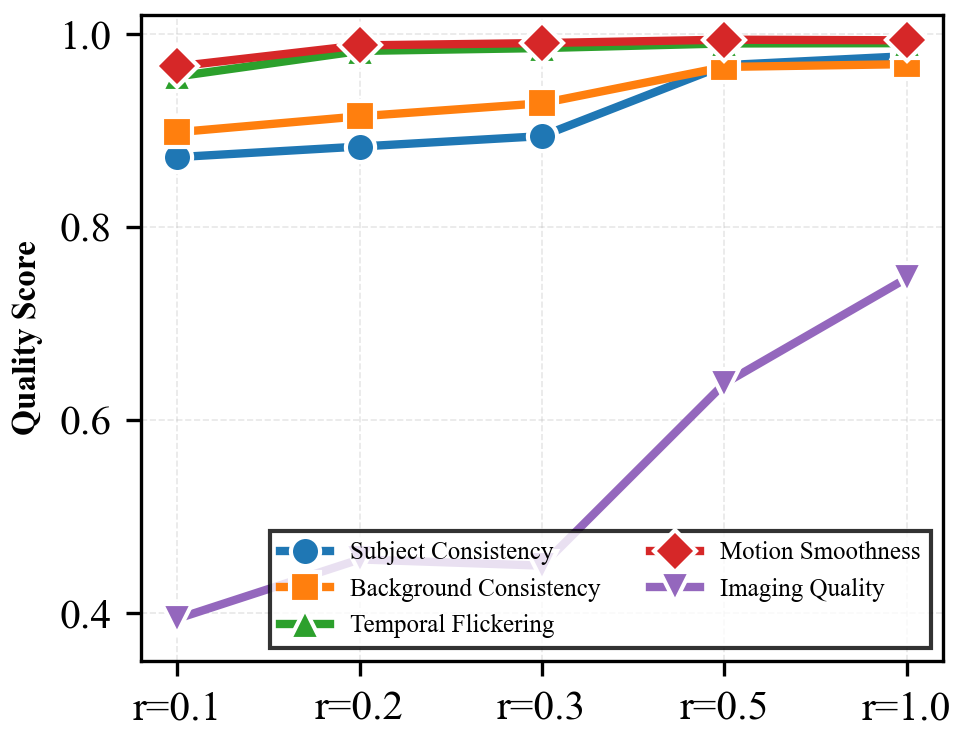}
        \vspace{-18pt}
        \captionof{figure}{Impact of overlap ratio on video generation quality.}
        \label{ablation-r-quality}
    \end{minipage}
    \hfill 
    \begin{minipage}[t]{0.23\textwidth}
        \centering
        \includegraphics[width=\textwidth]{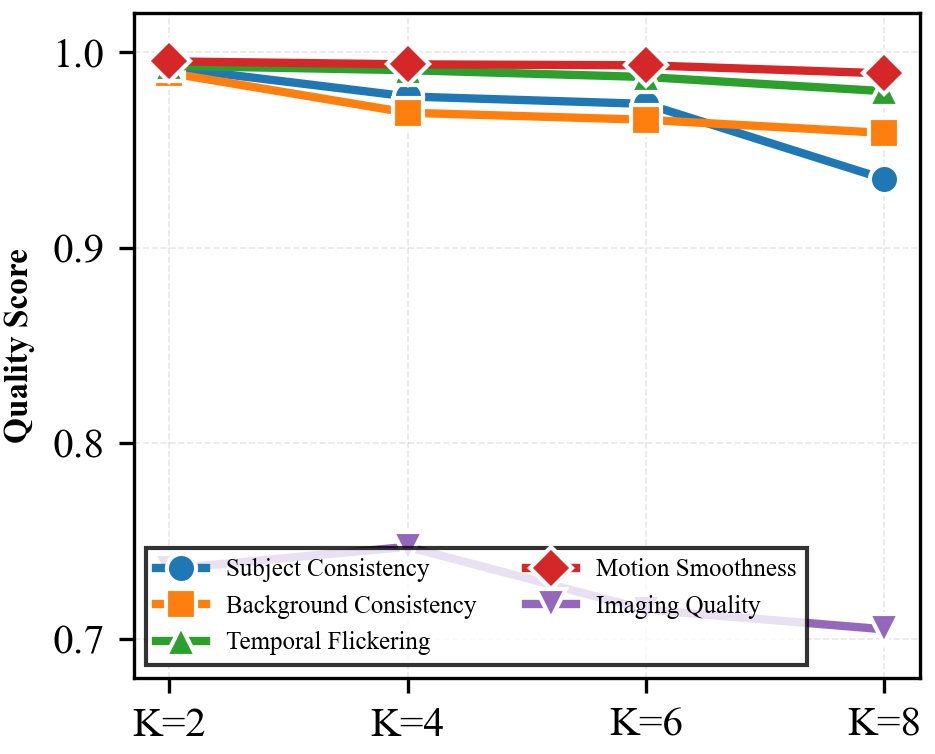}
        \vspace{-18pt}
        \caption{Impact of GPU numbers on video generation quality.}
        \label{diff-gpu}
    \end{minipage}
    \hfill 
    \begin{minipage}[t]{0.24\textwidth}
        \centering
        \includegraphics[width=\textwidth]{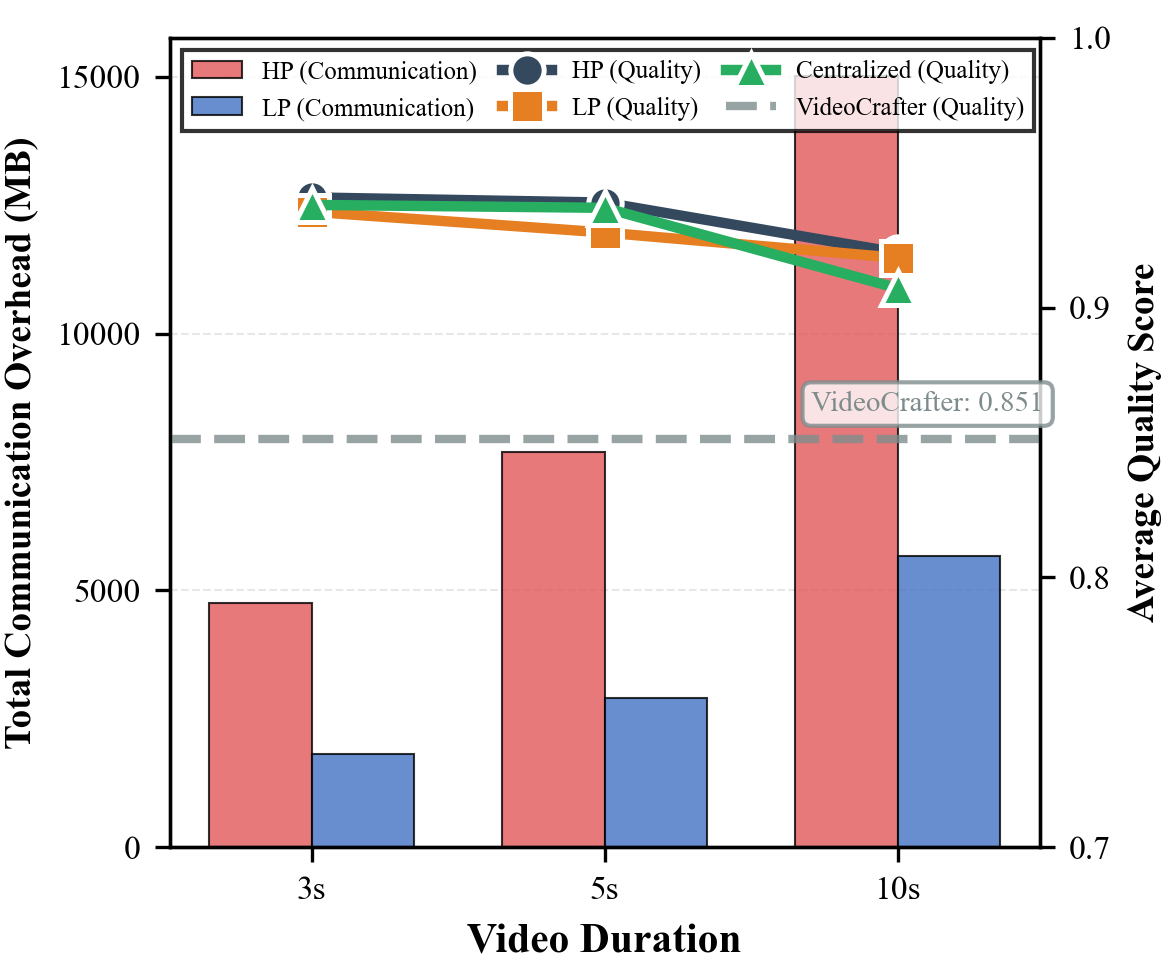}
        \vspace{-18pt}
        \caption{Impact of video duration on communication overhead and video generation quality.}
        \label{video-length}
    \end{minipage}

\end{figure*}

\begin{figure}[htbp]
    \centering 
    %
    \begin{minipage}[t]{0.43\textwidth}
        \centering
        \includegraphics[width=\linewidth]{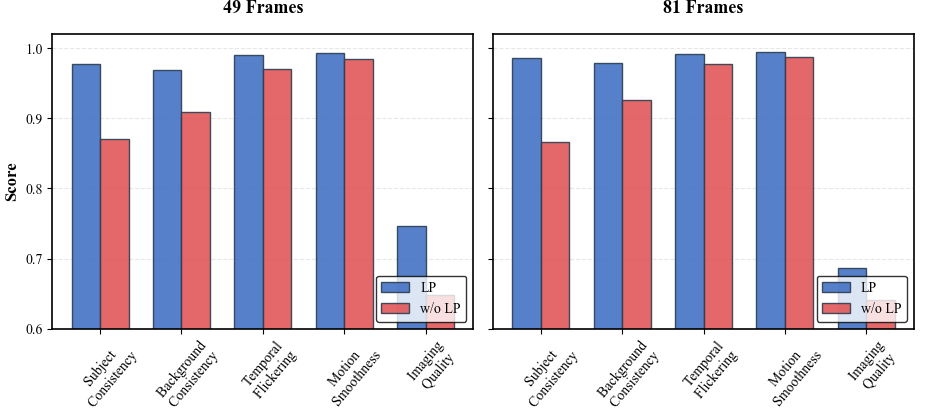}
        \vspace{-18pt}
        \captionof{figure}{Comparison of video generation quality: LP vs. baseline with only temporal partitioning (w/o LP).}
        \label{with-and-without-fp}
    \end{minipage}
\end{figure}

\subsection{Experimental Results}
\label{subsec:results}
\noindent
\textbf{Communication overhead.} Table \ref{comm-table} presents the communication overhead across different parallelism strategies for 49-frame (3s) and 81-frame (5s) video generation. 
Note that all benchmarks share the same communication overhead, which is independent of the specific prompt content. Traditional strategies (NMP, PP) exhibit severe communication burden, reaching 57GB and 92GB for 3s and 5s videos, respectively, as they repeatedly transfer high-dimensional activation tensors across GPUs. HP achieves reduced overhead compared with NMP and PP, but still requires over 4.6GB and 7.5GB of network traffic. In contrast, LP achieves dramatic reductions: with $r$=0.5, total communication overhead is merely 1.3GB (3s) and 2.1GB (5s), representing approximately 72\% reduction compared to HP and 97\% reduction over NMP and PP. This communication efficiency stems from LP's design, which transmits only compact latent tensors rather than high-dimensional activations.

\noindent
\textbf{Video generation quality.} Figure \ref{radar} presents quantitative quality evaluation results across three benchmarks using VBench metrics. The radar charts reveal two distinct performance tiers: LP, along with Centralized, NMP, PP, and HP, form an outer polygon with near-identical performance across all five dimensions (SC, BC, TF, MS, IQ), while VideoCrafter forms a contracted inner polygon with notably lower scores, particularly in IQ, SC, and MS. This demonstrates that LP achieves substantial communication savings without compromising video generation quality. Notably, LP consistently outperforms VideoCrafter across both 49- and 81-frame settings and all three benchmarks, confirming its robustness and generalization capability independent of video length or evaluation scenario.

\noindent
\textbf{Qualitative visualization.} Figure \ref{visualize} shows a qualitative comparison between Centralized and LP-generated videos. The selected frames demonstrate that LP maintains visual fidelity comparable to centralized generation across diverse scenarios. For instance, in the "Tom Cruise" prompt, both methods capture facial details and emotional expressions with equivalent quality. Similarly, for the "fruit signage" prompt, LP preserves texture clarity and color accuracy without introducing boundary artifacts or spatial discontinuities, indicating that our patch-aligned overlapping partition and position-aware latent reconstruction mechanisms effectively stitch the parallel-denoised sub-latents seamlessly. These visual results qualitatively corroborate our quantitative findings, suggesting that LP maintains video generation quality.

\subsection{Ablation Study}
\label{subsec:ablation}
We conduct a series of ablation studies to analyze the impact of key design choices and hyperparameters in LP. By default, we use the experimental configuration with $r=1.0$, 49-frame, and the first prompt sampled from EvalCrafter.

\noindent
\textbf{Impact of overlap ratio.} Figures \ref{ablation-r-comm} and \ref{ablation-r-quality} reveal that \textcolor{black}{while communication overhead of LP doubles from approximately 900MB to 1800MB as $r$ increases from 0.1 to 1.0, it still maintains a substantial reduction over HP (4758MB).} Meanwhile, video generation quality improves with increasing $r$ but plateaus beyond $r=0.5$, suggesting this value offers an optimal trade-off.

\noindent
\textbf{Impact of GPU Numbers.} Figure \ref{diff-gpu} demonstrates LP's scalability across 2-8 GPUs. As shown, LP exhibits consistently high video quality, superior to VideoCrafter across all configurations with minor variation, confirming its robustness to different cluster configurations.

\noindent
\textbf{Impact of video duration.} Figure \ref{video-length} illustrates LP's superior scalability for longer videos. While HP's overhead escalates from about 5GB to 15GB for 3s to 10s videos, LP's overhead increases by only 4GB. Notably, LP maintains quality metrics on par with HP and Centralized across all durations while significantly outperforming VideoCrafter, confirming its capability to handle extended video generation without compromising generation quality.

\noindent
\textbf{Ablation on partitioning strategy.} Figure \ref{with-and-without-fp} compares LP against a baseline using temporal-only partitioning (w/o LP). As shown, LP consistently outperforms the baseline across all metrics for both 49 and 81 frames, confirming that its dynamic rotating partition, patch-aligned overlapping, and position-aware latent reconstruction mechanisms work synergistically to preserve generation quality.

\noindent
\textbf{Analysis of end-to-end latency.} As shown in Table \ref{e2d-latency}, LP consistently delivers lower inference latency compared to NMP, even with a large overlap ratio ($r=1.0$). These results confirm that latent-space partitioning paradigm effectively addresses the communication bottleneck, translating reduced network traffic into tangible speedups for practical VDM serving.

\begin{table}[t]
\caption{End-to-end latency across two parallelism strategies.}
\vspace{-8pt}
\centering
\begin{tabular}{c|ccc}
\hline
        & NMP    & LP ($r$=1.0) & LP ($r$=0.5) \\ \hline
Latency & 239.33s & 220.69s     & \textbf{195.27s}     \\ \hline
\end{tabular}
\label{e2d-latency}
\end{table}

\section{Conclusion}
\label{sec:conclusion}
We propose LP, the first parallelism strategy tailored for VDM serving. Exploiting the local spatio-temporal dependency in the diffusion denoising process and the compactness of latent space, LP employs dynamic rotating partitions across temporal, height, and width dimensions at different timesteps, decomposing the global denoising task into parallelizable sub-problems with only lightweight latent tensors being communicated between GPUs. To preserve generation quality, we design patch-aligned overlapping partition that aligns with visual patch boundaries, and position-aware latent reconstruction that eliminates boundary artifacts through adaptive weighting. Experiments demonstrate that LP substantially reduces communication overhead compared to conventional parallelism strategies, while maintaining comparable video generation quality.



{
    \small
    \bibliographystyle{ieeenat_fullname}
    
}

\clearpage
\setcounter{page}{1}
\maketitlesupplementary

\section{Theoretical Analysis of LP's Communication Overhead}
In this section, we provide a rigorous formalization of the communication overhead in LP. We proceed under the assumption of an optimal LP implementation; this assumption applies to all subsequent analyses in this paper. Without loss of generality, we derive the total communication overhead of LP as $C_{LP}$, and perform quantitative comparisons against NMP ($C_{NMP}$) and PP ($C_{PP}$). Our analysis focuses on the primary communication components, which are the activation tensors transferred in NMP and PP as well as the video latents in LP, while excluding other overhead factors.

\subsection{Additional Notations}
We first introduce additional notations beyond those established in the main paper. Let $S_H$ denote the size of the high-dimensional intermediate activation tensor, and $S_z$ be the size of the complete global latent tensor. Under the LP strategy, $S_{sub}^{(k)}$ denotes the size of the sub-latent tensor assigned to the $k$-th GPU. To simplify the expression, we define the total size of all sub-latent tensors $S_{ext}$ as follows:
\begin{equation}
    S_{ext} = \sum_{k=1}^K S_{sub}^{(k)}.
\end{equation}
We further define the latent space expansion factor $\gamma(r, K)$ as follows:
\begin{equation}
    \gamma(r, K) = \frac{S_{ext}}{S_z},
    \label{pengzhang}
\end{equation}
which satisfies $\gamma \ge 1$ and increases monotonically with both $r$ and $K$. Finally, we use $C_{\text{cond\_pass}}$ and $C_{\text{uncond\_pass}}$ to denote the communication overhead for single conditional and unconditional forward pass in CFG, respectively, while $C_{NMP}, C_{PP}, C_{LP}$ represent the total communication overhead across $T$ denoising steps for each parallelism strategy.

\subsection{Communication Overhead of Baseline Parallelism Strategies}
We first analyze the communication overhead of baseline parallelism strategies.

\noindent
\textbf{Communication overhead of NMP.} In NMP, the DiT blocks of model $f(\cdot)$ are evenly distributed across $K$ GPUs. For a single forward pass of $f(\cdot)$, the computational flow must traverse $K$ GPUs sequentially, incurring communication at $K-1$ GPU boundaries. At each boundary, the high-dimensional activation tensor must be transferred. Therefore, the overhead for a single pass is:
\begin{equation}
C_{\text{pass, NMP}} = (K-1) \cdot S_H.
\end{equation}
At any timestep $t$, the communication overhead of NMP is determined by both conditional and unconditional executions required by CFG:
\begin{equation}
\begin{aligned}
&\;\;\;\;C_{\text{step, NMP}} \\&= C_{\text{cond\_pass}} + C_{\text{uncond\_pass}} \\
&= [(K-1) \cdot S_H] + [(K-1) \cdot S_H] \\&= 2(K-1) \cdot S_H.
\end{aligned}
\label{comm-nmp}
\end{equation}
Since the entire denoising process comprises $T$ steps, the total communication overhead of NMP is:
\begin{equation}
C_{NMP} = T \cdot C_{\text{step, NMP}} = 2T(K-1) \cdot S_H.
\label{nmp-comm-total}
\end{equation}

\noindent
\textbf{Communication overhead of PP.} 
In PP, the DiT block distribution across GPUs is identical to that in NMP. As described in the experimental section, PP exploits the two passes of CFG as micro-batches of size 2 to construct a pipeline. However, PP does not reduce the volume of tensors that must be transferred.
To complete $T$ denoising steps, the activation tensor for the total $2T$ forward passes must still traverse $K-1$ GPU boundaries:
\begin{equation}
C_{PP} = 2T(K-1) \cdot S_H
\end{equation}
Therefore, in terms of total communication volume approximation, we have $C_{PP} = C_{NMP}$. We will use this as the baseline for comparison with LP's communication overhead.

\subsection{Communication Overhead of LP}
\label{comm-cost-lp}
In LP, the inference paradigm undergoes a fundamental shift. Rather than partitioning the model $f(\cdot)$ across GPUs, we partition the relatively compact latent tensor across GPUs, and the communication occurs during the following two stages: dynamic rotating partition and latent reconstruction.

\noindent
\textbf{Communication overhead of dynamic rotating partition.}
In this stage, the master GPU partitions the global latent $z_t$ into $K$ overlapping sub-latent tensors $\{z_t^{(k)}\}_{k=1}^K$. It then scatters $K-1$ sub-latent tensors to the other GPUs:
\begin{equation}
C_{\text{Scatter}} = \sum_{k=2}^K S_{sub}^{(k)}
\end{equation}

\noindent
\textbf{Communication overhead of latent reconstruction.}
After parallel denoising, each worker GPU sends its computed local noise prediction $\tilde{f}_{FP}(z_t^{(k)})$ (with size equal to $z_t^{(k)}$) back to the master GPU, which gathers these predictions for position-aware latent reconstruction:
\begin{equation}
C_{\text{Gather}} = \sum_{k=2}^K S_{sub}^{(k)}
\end{equation}
Due to the CFG framework, LP requires parallel execution of two forward passes at each diffusion step. This means both scatter and gather operations must be performed twice. Therefore, the per-step communication overhead of LP is as follows:
\begin{equation}
\begin{aligned}
& \;\;\;\; C_{\text{step, LP}} \\
&= (C_{\text{Scatter}} + C_{\text{Gather}}) \times 2 \\
&= \left[ 2 \cdot \sum_{k=2}^K S_{sub}^{(k)} \right] \times 2 \\ & = 4 \cdot \sum_{k=2}^K S_{sub}^{(k)}.
\end{aligned}
\label{step-lp}
\end{equation}
Across the entire $T$-step denoising process, the total communication overhead of LP is:
\begin{equation}
C_{LP} = T \cdot C_{\text{step, LP}} = 4T \cdot \sum_{k=2}^K S_{sub}^{(k)}.
\end{equation}

\noindent
\textbf{Overhead approximation.}
To relate $C_{LP}$ to $S_z$, we approximate that the sub-latent tensors are roughly balanced across $K$ GPUs, which means:
\begin{equation}
    S_{sub}^{(k)} \approx \frac{S_{ext}}{K}.
\end{equation}
Then, 
\begin{equation}
    \sum_{k=2}^K S_{sub}^{(k)} \approx (K-1) \cdot \frac{S_{ext}}{K} = \frac{K-1}{K} S_{ext}.
\end{equation}
Substituting $S_{ext} = \gamma(r, K) \cdot S_z$ according to Eq. (\ref{pengzhang}), we have:
\begin{equation}
C_{LP} \approx 4T \frac{K-1}{K} \cdot \gamma(r, K) \cdot S_z.
\end{equation}

\subsection{Comparative Analysis}

We can now derive the communication overhead ratio of LP relative to NMP and PP. For NMP, we have:
\begin{equation}
\begin{aligned}
& \;\;\;\; R_{LP/NMP} \\
& = \frac{C_{LP}}{C_{NMP}} \\
& \approx \frac{4T \frac{K-1}{K} \gamma(r, K) \cdot S_z}{2T(K-1) \cdot S_H} \\
& \approx \frac{4 \cdot \gamma(r, K) \cdot S_z}{2 \cdot K \cdot S_H} \\
& \approx \frac{2 \cdot \gamma(r, K)}{K} \cdot \left( \frac{S_z}{S_H} \right).
\end{aligned}
\end{equation}
For PP, the ratio is identical as $C_{NMP} = C_{PP}$:
\begin{equation}
\begin{aligned}
& \;\;\;\;
R_{LP/PP} \\&= \frac{C_{LP}}{C_{PP}} \\&=R_{LP/NMP} \\&\approx \frac{2 \cdot \gamma(r, K)}{K} \cdot \left( \frac{S_z}{S_H} \right).
\end{aligned}
\end{equation}
The derived ratio $R \approx \frac{2 \cdot \gamma(r, K)}{K} \cdot \left( \frac{S_z}{S_H} \right)$ theoretically reveals the source of LP's communication superiority:

\begin{itemize}
    \item \textbf{Critical ratio} $\frac{S_z}{S_H}$. The core of LP's communication efficiency lies in $S_z \ll S_H$, as the video latent is significantly more compact than high-dimensional activation tensors within DiT blocks. For models like WAN2.1, empirical values of this ratio are as low as $\sim$5\%.
    \item \textbf{Constant factor} $\frac{2 \cdot \gamma(r, K)}{K}$. Under all circumstances, $\frac{\gamma(r, K)}{K}$ will be a factor bounded by 1, implying that $\frac{2 \cdot \gamma(r, K)}{K}$ is always at most 2.
\end{itemize}
Ultimately, the communication overhead ratio $R$ is dominated by the extremely small term $\frac{S_z}{S_H}$. This perfectly explains why LP achieves up to 97\% reduction in communication overhead, as demonstrated in the main paper.

In summary, while NMP and PP incur communication overhead bound by the massive $S_H$, LP's overhead is bound by the compact $S_z$. Our theoretical analysis demonstrates that this advantage $R \propto \frac{S_z}{S_H}$ stems from the fundamental paradigm shift in VDM parallelism strategies: partitioning the latent space rather than partitioning the model.

\section{Theoretical Analysis of LP's Video Generation Quality}
In this section, we provide a theoretical analysis establishing how LP preserves video generation quality. We adopt a receptive field perspective to rigorously demonstrate that LP maintains global consistency by ensuring complete spatio-temporal context coverage throughout the denoising process.

\subsection{Preliminaries and Key Definitions}
We introduce three foundational concepts to formalize the quality preservation guarantees of LP.

\begin{definition}[Receptive Field]
The receptive field of a position $p$ in the latent space after $i$ denoising steps, denoted as $\mathcal{R}(p, i)$, characterizes the set of all spatio-temporal positions whose information has been exchanged with position $p$ through the diffusion process.
\end{definition}

\begin{definition}[Completeness]
A parallel inference method is \emph{complete} if it enables the receptive field of any position to eventually cover the entire spatio-temporal latent space $Z$ during the diffusion process, i.e., $\exists N \in \mathbb{N}$ such that $\mathcal{R}(p, N) = Z$ for all $p \in Z$. This property ensures global consistency in the generated video.
\end{definition}
Obviously, centralized inference (processing the entire latent space on a single GPU) is trivially complete, as all positions directly interact within each denoising step. In contrast, naive parallelization strategies that partition solely along a single dimension (e.g., w/o LP) are incomplete, as positions in different partitions cannot exchange information across the partitioned dimension.

\begin{definition}[$N$-Completeness]
A parallel inference method is \emph{$N$-complete} if the receptive field of any position can cover the entire global latent space within $N$ consecutive denoising steps, i.e., $\mathcal{R}(p, N) = Z$ for all $p \in Z$. This quantifies the efficiency of global information propagation.
\end{definition}


\subsection{Main Theorem and Proof}

We now establish the key theoretical result that guarantees LP's quality preservation.

\begin{theorem}[2-Completeness of LP]
\label{thm:lp-completeness}
Latent Parallelism is a 2-complete parallelism strategy. Formally, for any position $p$ in the latent space $Z$, we have $\mathcal{R}(p, 2) = Z$.
\end{theorem}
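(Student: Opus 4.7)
The plan is to characterize the receptive field after a single denoising step in terms of the partitioning dimension, and then show by a simple bridge argument that any two consecutive LP steps, which (by the rotation rule $d_i = \mathcal{M}[(i-1) \bmod 3 + 1]$) necessarily partition along two distinct dimensions, jointly propagate information from any source position $p$ to any target position $q$.

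The first step is the single-step lemma: because each DiT block applies full attention across its input, every pair of positions lying in the same sub-region exchanges information within one forward pass. Consequently, if step $i$ partitions along dimension $d_i \in \{T, H, W\}$, then for any $p \in Z$,
\begin{equation}
\mathcal{R}(p, 1) \;=\; \bigl\{\, q \in Z \,:\, q_{d_i} \in P_{d_i}(p) \,\bigr\},
\end{equation}
where $P_{d_i}(p)$ denotes the $d_i$-partition containing $p$. In other words, $\mathcal{R}(p,1)$ is a slab: restricted along $d_i$ to $p$'s partition, but spanning the full extent along the other two dimensions.

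The second step is the bridge argument for two steps. By the rotation rule, any two consecutive steps partition along distinct dimensions; without loss of generality let step $1$ partition along $d_1$ and step $2$ along $d_2 \ne d_1$, and let $d_3$ denote the third dimension. Given any target $q \in Z$, I construct a bridge point $b$ defined by $b_{d_1} \in P_{d_1}(p)$, $b_{d_2} \in P_{d_2}(q)$, and $b_{d_3}$ arbitrary (such a choice always exists since each coordinate range is non-empty). By the single-step lemma, $b \in \mathcal{R}(p,1)$. Moreover, $b$ and $q$ lie in the same $d_2$-partition at step $2$, so applying the single-step lemma again yields $q \in \mathcal{R}(b, 1) \subseteq \mathcal{R}(p, 2)$. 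Since $q$ was arbitrary, $\mathcal{R}(p,2) = Z$, which is exactly $2$-completeness.

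The only delicate point will be confirming that the patch-aligned overlapping partition of Section 3.3 does not weaken the single-step lemma: overlapping regions only enlarge each sub-region's extent along $d_i$ while preserving the full-extent property along the two non-partitioned dimensions, so the slab characterization (and hence the bridge construction) carries over unchanged. I would make this observation explicit as a one-line remark, note that the rotation rule guarantees $d_{i+1} \ne d_i$ for every $i$ (so the theorem in fact holds for any pair of consecutive steps, not just steps $1$ and $2$), and conclude.
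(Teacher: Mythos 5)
Your proposal is correct and follows essentially the same approach as the paper: both hinge on the single-step "slab" characterization of the receptive field (full extent along the two non-partitioned dimensions, restricted along the partitioned one) together with the rotation rule guaranteeing that consecutive steps cut along distinct dimensions. The paper reaches $\mathcal{R}(p,2)=Z$ by computing $\bigcup_{p_1\in\mathcal{R}(p,1)}\mathcal{R}(p_1,1)$ and verifying coverage dimension by dimension, whereas you exhibit an explicit bridge point $b$ with $b_{d_1}\in P_{d_1}(p)$ and $b_{d_2}\in P_{d_2}(q)$ for an arbitrary target $q$ — the two formulations are equivalent, and your remark that overlapping partitions can only enlarge each slab is a valid way to confirm the lemma still holds under the patch-aligned overlapping scheme.
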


\begin{proof}
Let $Z$ denote the video latent space that spans temporal, height, and width dimensions, which can be expressed as $Z \sim D_T \times D_H \times D_W$. Any position $p \in Z$ is uniquely identified by its coordinates $p = (t, h, w)$ where $t \in [0, D_T)$, $h \in [0, D_H)$, and $w \in [0, D_W)$.

\paragraph{Step 1: revisit the partition strategy.}
As described in the main paper, the partition dimension $d_i$ at the forward propagation step $i$ is determined by:
\begin{equation}
d_i = \mathcal{M}[(i-1) \bmod 3 + 1],
\end{equation}
where $\mathcal{M}(\cdot)$ maps indices 1, 2, 3 to temporal, height, and width dimensions, respectively. This rotating partition strategy ensures that any two consecutive steps $i$ and $i+1$ partition along different dimensions.

\paragraph{Step 2: general case analysis.}
Without loss of generality, we analyze the case where step $i$ partitions along the temporal dimension and step $i+1$ partitions along the height dimension. The analysis for any other pair of consecutive different dimensions (e.g., temporal/width or height/width) follows by symmetry.

\paragraph{Step 3: receptive field after first step.}
At step $i$, the latent space $Z$ is partitioned along the temporal dimension into $K$ sub-latents $\{z_{t_i}^{(k)}\}_{k=1}^K$. Consider a position $p = (t, h, w)$. Due to the dynamic rotating partition strategy, position $p$ resides in one or more sub-latents $z_{t_i}^{(k)}$ that cover the temporal partition $\mathcal{P}_T(t)$ containing coordinate $t$, while spanning the complete height and width dimensions, which means:
\begin{equation}
z_{t_i}^{(k)} \sim \mathcal{P}_T(t) \times [0, D_H) \times [0, D_W).
\end{equation}
During parallel denoising, the DiT blocks perform self-attention computation within each sub-latent $z_{t_i}^{(k)}$. Since the attention mechanism operates globally across all positions within the sub-latent, information at position $p$ is fused with all positions sharing the same temporal partition. Consequently, the receptive field after one step can be expressed as follows:
\begin{equation}
\begin{aligned}
\mathcal{R}(p, 1) = \{(t', h', w') \mid t' &\in \mathcal{P}_T(t), \\
&h' \in [0, D_H), w' \in [0, D_W)\}.
\end{aligned}
\end{equation}
This indicates that information from position $p$ has achieved global propagation across the height and width dimensions, while maintaining locality in the temporal dimension.

\paragraph{Step 4: receptive field after second step.}
After latent reconstruction at step $i$, the updated latent space is repartitioned along the height dimension at step $i+1$. Consider any position $p_1 = (t_1, h_1, w_1) \in \mathcal{R}(p, 1)$, which already carries information from the original position $p$. Position $p_1$ is assigned to a new sub-latent $z_{t_{i+1}}^{(k')}$ covering the height partition $\mathcal{P}_H(h_1)$ containing $h_1$, while spanning the complete temporal and width dimensions:
\begin{equation}
z_{t_{i+1}}^{(k')} \sim [0, D_T) \times \mathcal{P}_H(h_1) \times [0, D_W).
\end{equation}
Through attention computation at step $i+1$, information from $p_1$ (which contains information from $p$) propagates to all positions within $z_{t_{i+1}}^{(k')}$. The receptive field of $p_1$ after this second step becomes:
\begin{equation}
\begin{aligned}
\mathcal{R}(p_1, 1) = \{(t'', h'', w'') \mid t'' &\in [0, D_T), \\
&h'' \in \mathcal{P}_H(h_1), w'' \in [0, D_W)\}.
\end{aligned}
\end{equation}
To this end, the receptive field of the original position $p$ after two denoising steps is obtained by taking the union over all intermediate positions:
\begin{equation}
\begin{aligned}
& \;\;\;\; \mathcal{R}(p, 2) \\ & = \bigcup_{p_1 \in \mathcal{R}(p, 1)} \mathcal{R}(p_1, 1) \\
&= \bigcup_{(t_1, h_1, w_1) \in \mathcal{R}(p, 1)} \{(t'', h'', w'') \mid \\
&\quad t'' \in [0, D_T), h'' \in \mathcal{P}_H(h_1), w'' \in [0, D_W)\}.
\end{aligned}
\end{equation}

\paragraph{Step 5: complete coverage analysis.}
We now analyze the coverage across each dimension:
\begin{itemize}
    \item Temporal dimension: Each receptive field $\mathcal{R}(p_1, 1)$ spans the complete range $t'' \in [0, D_T)$, thus $\mathcal{R}(p, 2)$ fully covers the temporal dimension.
    
    \item Width dimension: Similarly, each $\mathcal{R}(p_1, 1)$ spans $w'' \in [0, D_W)$, ensuring complete width coverage.
    
    \item Height dimension: Since $\mathcal{R}(p, 1)$ already covers the complete height range $h_1 \in [0, D_H)$ from the first step, and the union is taken over all such positions, we have:
    \begin{equation}
    \bigcup_{h_1 \in [0, D_H)} \mathcal{P}_H(h_1) = [0, D_H).
    \end{equation}
    This follows from the fact that the height partitions $\{\mathcal{P}_H(h)\}$ collectively cover the entire height dimension.
\end{itemize}
Therefore, we can conclude that:
\begin{equation}
\begin{aligned}
& \;\;\;\; \mathcal{R}(p, 2) \\&= \{(t'', h'', w'') \mid t'' \in [0, D_T), 
\\
&\;\;\;\; h'' \in [0, D_H), w'' \in [0, D_W)\} \\&= Z.
\end{aligned}
\end{equation}
This demonstrates that the receptive field of any position $p$ covers the entire latent space $Z$ after two consecutive denoising steps with different partition dimensions. 
Since LP's dynamic rotating partition strategy ensures that consecutive partitions always operate along different dimensions, any position achieves global coverage within two steps.
Therefore, LP is a 2-complete parallelism strategy.
\end{proof}

\subsection{Implications for Video Generation Quality}
The 2-completeness property established in Theorem~\ref{thm:lp-completeness} has significant implications for preserving video generation quality. On the one hand, it ensures that every position in the latent space accesses complete spatio-temporal context within just two denoising steps. Given state-of-the-art methods typically require $T \geq 20$ denoising steps, each position can undergo approximately $\lfloor T/2 \rfloor$ complete information exchange cycles throughout the diffusion process. This repeated global interaction prevents the accumulation of local inconsistencies that could manifest as temporal flickering, spatial discontinuities, or motion artifacts. On the other hand, 2-completeness establishes that LP provides equivalent information flow to centralized inference. This theoretical equivalence explains why LP maintains video quality comparable to centralized inference across all benchmarks in our experiments.

\begin{table*}[t]
\caption{Numerical comparison of video generation quality across three datasets. Note that the results for VideoCrafter are reported from the VBench benchmark \cite{huang2024vbench}.}
\setlength{\tabcolsep}{2.5pt} 
\renewcommand{\arraystretch}{0.9}
\centering
\begin{tabular}{l|c|lccccc|c}
\hline
\multicolumn{1}{c|}{\multirow{24}{*}{\textbf{49 Frames}}} & \multirow{8}{*}{\textbf{EvalCrafter}}   & \multicolumn{1}{c}{\textbf{Method}} & \textbf{\begin{tabular}[c]{@{}c@{}}Subject\\ Consistency\end{tabular}} & \textbf{\begin{tabular}[c]{@{}c@{}}Background\\ Consistency\end{tabular}} & \textbf{\begin{tabular}[c]{@{}c@{}}Temporal\\ Flickering\end{tabular}} & \textbf{\begin{tabular}[c]{@{}c@{}}Motion\\ Smoothness\end{tabular}} & \textbf{\begin{tabular}[c]{@{}c@{}}Imaging\\ Quality\end{tabular}} & \textbf{Average} \\ \cline{3-9} 
\multicolumn{1}{c|}{}                                     &                                         & Centralized                         & 0.9719                                                                 & 0.9744                                                                    & 0.9814                                                                 & 0.9885                                                               & 0.7051                                                             & 0.9243           \\
\multicolumn{1}{c|}{}                                     &                                         & NMP                                 & 0.9763                                                                 & 0.9825                                                                    & 0.9830                                                                 & 0.9897                                                               & 0.7301                                                             & 0.9323           \\
\multicolumn{1}{c|}{}                                     &                                         & PP                                  & 0.9775                                                                 & 0.9825                                                                    & 0.9833                                                                 & 0.9897                                                               & 0.7186                                                             & 0.9303           \\
\multicolumn{1}{c|}{}                                     &                                         & HP                                  & 0.9522                                                                 & 0.9757                                                                    & 0.9795                                                                 & 0.9886                                                               & 0.7108                                                             & 0.9214           \\
\multicolumn{1}{c|}{}                                     &                                         & LP ($r$=1.0)                          & 0.9598                                                                 & 0.9694                                                                    & 0.9777                                                                 & 0.9865                                                               & 0.7079                                                             & 0.9203           \\
\multicolumn{1}{c|}{}                                     &                                         & LP ($r$=0.5)                          & 0.9457                                                                 & 0.9605                                                                    & 0.9770                                                                 & 0.9855                                                               & 0.7248                                                             & 0.9187           \\
\multicolumn{1}{c|}{}                                     &                                         & VideoCrafter                       & 0.8624                                                                 & 0.9288                                                                    & 0.9760                                                                 & 0.9179                                                               & 0.5722                                                             & 0.8515           \\ \cline{2-9} 
\multicolumn{1}{c|}{}                                     & \multirow{8}{*}{\textbf{T2V-CompBench}} & \multicolumn{1}{c}{\textbf{Method}} & \textbf{\begin{tabular}[c]{@{}c@{}}Subject\\ Consistency\end{tabular}} & \textbf{\begin{tabular}[c]{@{}c@{}}Background\\ Consistency\end{tabular}} & \textbf{\begin{tabular}[c]{@{}c@{}}Temporal\\ Flickering\end{tabular}} & \textbf{\begin{tabular}[c]{@{}c@{}}Motion\\ Smoothness\end{tabular}} & \textbf{\begin{tabular}[c]{@{}c@{}}Imaging\\ Quality\end{tabular}} & \textbf{Average} \\ \cline{3-9} 
\multicolumn{1}{c|}{}                                     &                                         & Centralized                         & 0.9560                                                                 & 0.9686                                                                    & 0.9735                                                                 & 0.9834                                                               & 0.6444                                                             & 0.9052           \\
\multicolumn{1}{c|}{}                                     &                                         & NMP                                 & 0.9526                                                                 & 0.9735                                                                    & 0.9739                                                                 & 0.9825                                                               & 0.6793                                                             & 0.9124           \\
\multicolumn{1}{c|}{}                                     &                                         & PP                                  & 0.9731                                                                 & 0.9784                                                                    & 0.9787                                                                 & 0.9869                                                               & 0.6993                                                             & 0.9232           \\
\multicolumn{1}{c|}{}                                     &                                         & HP                                  & 0.9583                                                                 & 0.9698                                                                    & 0.9802                                                                 & 0.9879                                                               & 0.6780                                                             & 0.9149           \\
\multicolumn{1}{c|}{}                                     &                                         & LP ($r$=1.0)                          & 0.9449                                                                 & 0.9639                                                                    & 0.9718                                                                 & 0.9840                                                               & 0.6564                                                             & 0.9042           \\
\multicolumn{1}{c|}{}                                     &                                         & LP ($r$=0.5)                          & 0.9421                                                                 & 0.9511                                                                    & 0.9758                                                                 & 0.9844                                                               & 0.6731                                                             & 0.9053           \\
\multicolumn{1}{c|}{}                                     &                                         & VideoCrafter                        & 0.8624                                                                 & 0.9288                                                                    & 0.9760                                                                 & 0.9179                                                               & 0.5722                                                             & 0.8515           \\ \cline{2-9} 
\multicolumn{1}{c|}{}                                     & \multirow{8}{*}{\textbf{VBench}}        & \multicolumn{1}{c}{\textbf{Method}} & \textbf{\begin{tabular}[c]{@{}c@{}}Subject\\ Consistency\end{tabular}} & \textbf{\begin{tabular}[c]{@{}c@{}}Background\\ Consistency\end{tabular}} & \textbf{\begin{tabular}[c]{@{}c@{}}Temporal\\ Flickering\end{tabular}} & \textbf{\begin{tabular}[c]{@{}c@{}}Motion\\ Smoothness\end{tabular}} & \textbf{\begin{tabular}[c]{@{}c@{}}Imaging\\ Quality\end{tabular}} & \textbf{Average} \\ \cline{3-9} 
\multicolumn{1}{c|}{}                                     &                                         & Centralized                         & 0.9833                                                                 & 0.9809                                                                    & 0.9831                                                                 & 0.9904                                                               & 0.7246                                                             & 0.9325           \\
\multicolumn{1}{c|}{}                                     &                                         & NMP                                 & 0.9839                                                                 & 0.9803                                                                    & 0.9832                                                                 & 0.9883                                                               & 0.7241                                                             & 0.9320           \\
\multicolumn{1}{c|}{}                                     &                                         & PP                                  & 0.9942                                                                 & 0.9879                                                                    & 0.9866                                                                 & 0.9906                                                               & 0.7353                                                             & 0.9389           \\
\multicolumn{1}{c|}{}                                     &                                         & HP                                  & 0.9878                                                                 & 0.9827                                                                    & 0.9838                                                                 & 0.9904                                                               & 0.7201                                                             & 0.9329           \\
\multicolumn{1}{c|}{}                                     &                                         & LP ($r$=1.0)                          & 0.9691                                                                 & 0.9733                                                                    & 0.9803                                                                 & 0.9888                                                               & 0.7550                                                             & 0.9333           \\
\multicolumn{1}{c|}{}                                     &                                         & LP ($r$=0.5)                          & 0.9606                                                                 & 0.9648                                                                    & 0.9787                                                                 & 0.9873                                                               & 0.7516                                                             & 0.9286           \\
\multicolumn{1}{c|}{}                                     &                                         & VideoCrafter                        & 0.8624                                                                 & 0.9288                                                                    & 0.9760                                                                 & 0.9179                                                               & 0.5722                                                             & 0.8515           \\ \hline
\multirow{24}{*}{\textbf{81 Frames}}                      & \multirow{8}{*}{\textbf{EvalCrafter}}   & \multicolumn{1}{c}{\textbf{Method}} & \textbf{\begin{tabular}[c]{@{}c@{}}Subject\\ Consistency\end{tabular}} & \textbf{\begin{tabular}[c]{@{}c@{}}Background\\ Consistency\end{tabular}} & \textbf{\begin{tabular}[c]{@{}c@{}}Temporal\\ Flickering\end{tabular}} & \textbf{\begin{tabular}[c]{@{}c@{}}Motion\\ Smoothness\end{tabular}} & \textbf{\begin{tabular}[c]{@{}c@{}}Imaging\\ Quality\end{tabular}} & \textbf{Average} \\ \cline{3-9} 
                                                          &                                         & Centralized                         & 0.9774                                                                 & 0.9803                                                                    & 0.9863                                                                 & 0.9921                                                               & 0.7202                                                             & 0.9313           \\
                                                          &                                         & NMP                                 & 0.9773                                                                 & 0.9856                                                                    & 0.9867                                                                 & 0.9924                                                               & 0.6792                                                             & 0.9242           \\
                                                          &                                         & PP                                  & 0.9776                                                                 & 0.9782                                                                    & 0.9825                                                                 & 0.9896                                                               & 0.7060                                                             & 0.9268           \\
                                                          &                                         & HP                                  & 0.9776                                                                 & 0.9824                                                                    & 0.9832                                                                 & 0.9912                                                               & 0.7170                                                             & 0.9303           \\
                                                          &                                         & LP ($r$=1.0)                          & 0.9663                                                                 & 0.9714                                                                    & 0.9834                                                                 & 0.9893                                                               & 0.7173                                                             & 0.9255           \\
                                                          &                                         & LP ($r$=0.5)                          & 0.9358                                                                 & 0.9606                                                                    & 0.9787                                                                 & 0.9879                                                               & 0.7036                                                             & 0.9133           \\
                                                          &                                         & VideoCrafter                        & 0.8624                                                                 & 0.9288                                                                    & 0.9760                                                                 & 0.9179                                                               & 0.5722                                                             & 0.8515           \\ \cline{2-9} 
                                                          & \multirow{8}{*}{\textbf{T2V-CompBench}} & \multicolumn{1}{c}{\textbf{Method}} & \textbf{\begin{tabular}[c]{@{}c@{}}Subject\\ Consistency\end{tabular}} & \textbf{\begin{tabular}[c]{@{}c@{}}Background\\ Consistency\end{tabular}} & \textbf{\begin{tabular}[c]{@{}c@{}}Temporal\\ Flickering\end{tabular}} & \textbf{\begin{tabular}[c]{@{}c@{}}Motion\\ Smoothness\end{tabular}} & \textbf{\begin{tabular}[c]{@{}c@{}}Imaging\\ Quality\end{tabular}} & \textbf{Average} \\ \cline{3-9} 
                                                          &                                         & Centralized                         & 0.9707                                                                 & 0.9768                                                                    & 0.9807                                                                 & 0.9874                                                               & 0.6637                                                             & 0.9159           \\
                                                          &                                         & NMP                                 & 0.9681                                                                 & 0.9809                                                                    & 0.9818                                                                 & 0.9885                                                               & 0.6779                                                             & 0.9194           \\
                                                          &                                         & PP                                  & 0.9427                                                                 & 0.9690                                                                    & 0.9772                                                                 & 0.9872                                                               & 0.7043                                                             & 0.9161           \\
                                                          &                                         & HP                                  & 0.9738                                                                 & 0.9756                                                                    & 0.9833                                                                 & 0.9894                                                               & 0.6978                                                             & 0.9240           \\
                                                          &                                         & LP ($r$=1.0)                          & 0.9556                                                                 & 0.9642                                                                    & 0.9780                                                                 & 0.9850                                                               & 0.7103                                                             & 0.9186           \\
                                                          &                                         & LP ($r$=0.5)                          & 0.9409                                                                 & 0.9576                                                                    & 0.9771                                                                 & 0.9845                                                               & 0.6628                                                             & 0.9045           \\
                                                          &                                         & VideoCrafter                        & 0.8624                                                                 & 0.9288                                                                    & 0.9760                                                                 & 0.9179                                                               & 0.5722                                                             & 0.8515           \\ \cline{2-9} 
                                                          & \multirow{8}{*}{\textbf{VBench}}        & \multicolumn{1}{c}{\textbf{Method}} & \textbf{\begin{tabular}[c]{@{}c@{}}Subject\\ Consistency\end{tabular}} & \textbf{\begin{tabular}[c]{@{}c@{}}Background\\ Consistency\end{tabular}} & \textbf{\begin{tabular}[c]{@{}c@{}}Temporal\\ Flickering\end{tabular}} & \textbf{\begin{tabular}[c]{@{}c@{}}Motion\\ Smoothness\end{tabular}} & \textbf{\begin{tabular}[c]{@{}c@{}}Imaging\\ Quality\end{tabular}} & \textbf{Average} \\ \cline{3-9} 
                                                          &                                         & Centralized                         & 0.9557                                                                 & 0.9729                                                                    & 0.9804                                                                 & 0.9891                                                               & 0.7552                                                             & 0.9307           \\
                                                          &                                         & NMP                                 & 0.9663                                                                 & 0.9714                                                                    & 0.9802                                                                 & 0.9883                                                               & 0.7329                                                             & 0.9278           \\
                                                          &                                         & PP                                  & 0.9770                                                                 & 0.9730                                                                    & 0.9837                                                                 & 0.9904                                                               & 0.7255                                                             & 0.9299           \\
                                                          &                                         & HP                                  & 0.9800                                                                 & 0.9711                                                                    & 0.9885                                                                 & 0.9926                                                               & 0.6951                                                             & 0.9255           \\
                                                          &                                         & LP ($r$=1.0)                          & 0.9688                                                                 & 0.9661                                                                    & 0.9757                                                                 & 0.9870                                                               & 0.7580                                                             & 0.9311           \\
                                                          &                                         & LP ($r$=0.5)                          & 0.9553                                                                 & 0.9476                                                                    & 0.9760                                                                 & 0.9842                                                               & 0.7365                                                             & 0.9199           \\
                                                          &                                         & VideoCrafter                        & 0.8624                                                                 & 0.9288                                                                    & 0.9760                                                                 & 0.9179                                                               & 0.5722                                                             & 0.8515           \\ \hline
\end{tabular}

\label{detailed-evaluation}
\end{table*}

\begin{figure*}[!t]
    \centering
    \begin{minipage}{\linewidth}
        \centering
        \includegraphics[width=0.98\textwidth]{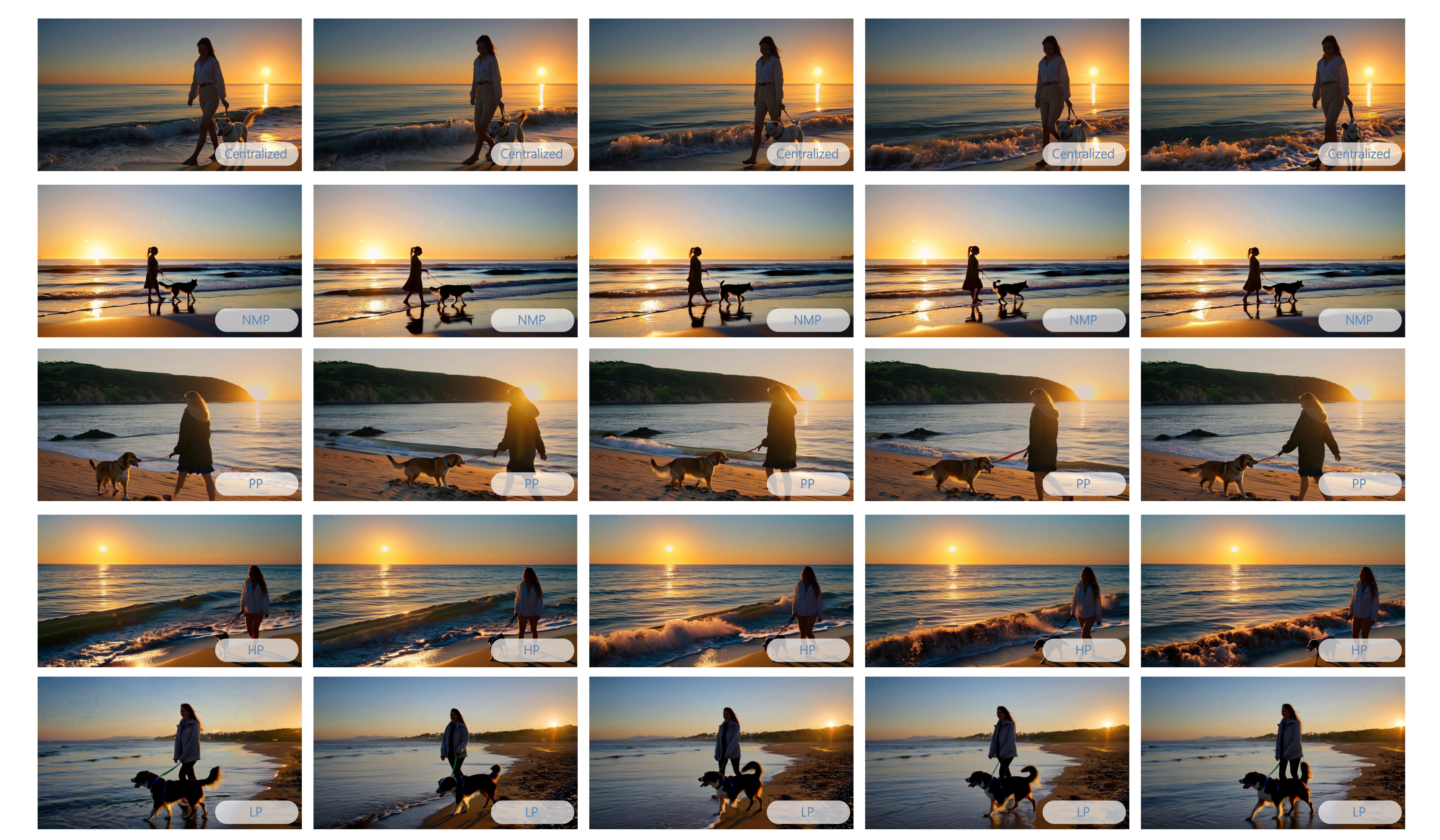}
        \caption{Videos generated using a prompt sampled from EvalCrafter. Prompt: "A woman is walking her dog on the beach at sunset."}
        \label{evalcrafter-vis}
    \end{minipage}\par\medskip
\vspace{5pt}
    \centering
    \begin{minipage}{\linewidth}
        \centering
        \includegraphics[width=0.98\textwidth]{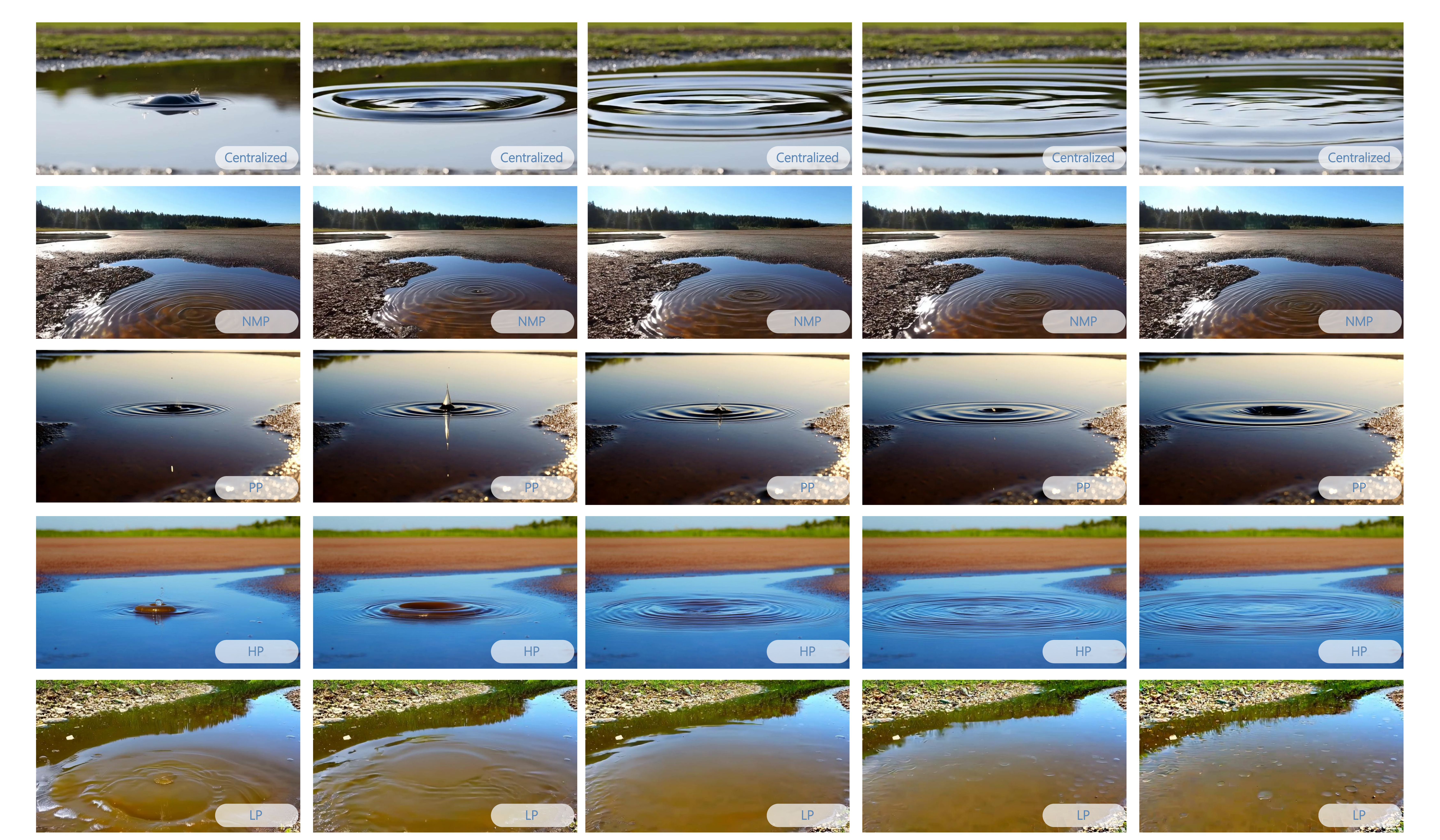}
        \caption{Videos generated using a prompt sampled from T2V-CompBench. Prompt: "A timelapse of a puddle drying up and disappearing on a hot day."}
        \label{t2v-vis}
    \end{minipage}\par\medskip
\end{figure*}

\begin{figure*}[!t]
    \centering
    \begin{minipage}{\linewidth}
        \centering
        \includegraphics[width=0.98\textwidth]{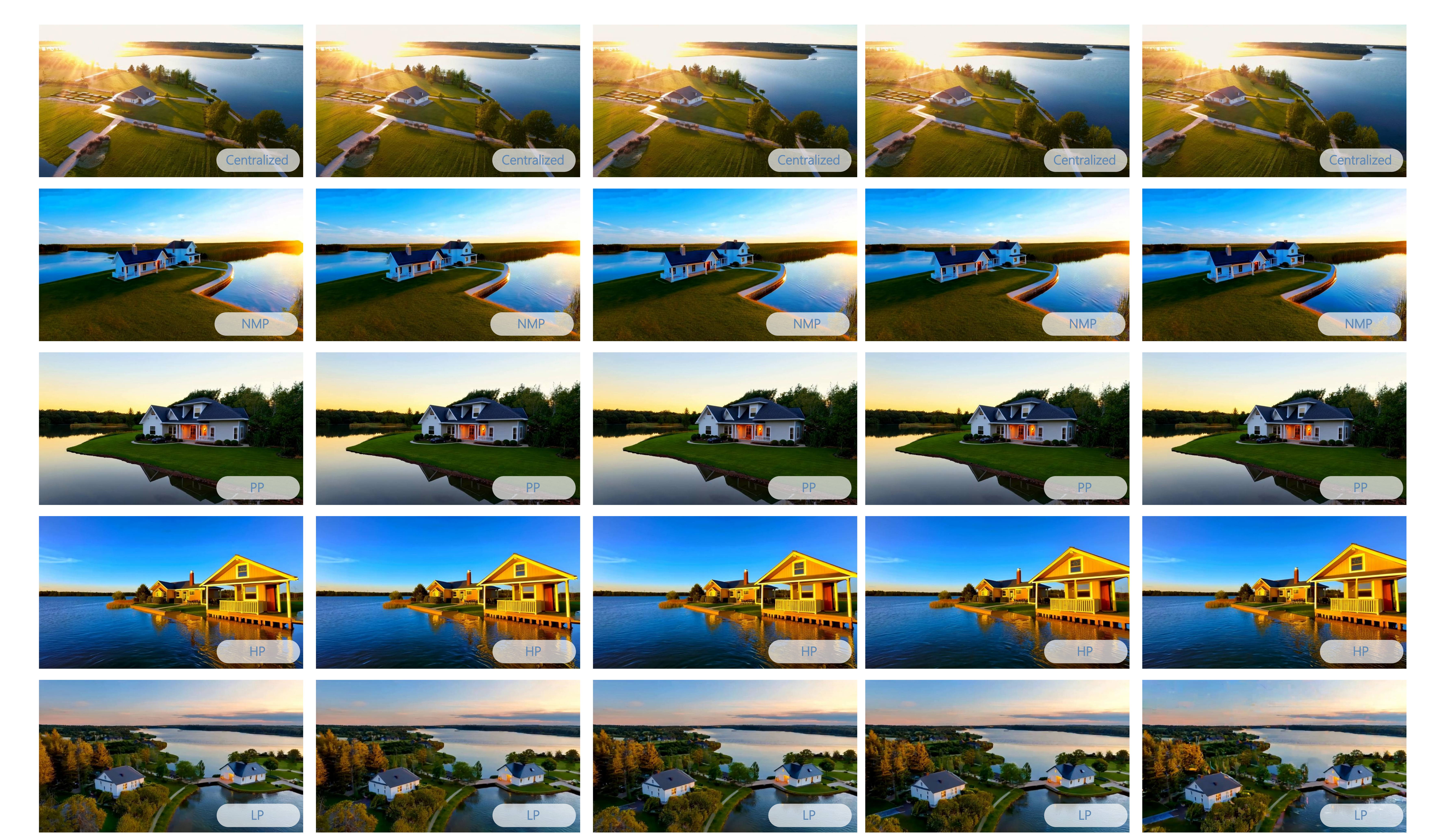}
        \caption{Videos generated using a prompt sampled from VBench. Prompt: "Drone view of house near lake during golden hour."}
        \label{vbench-vis}
    \end{minipage}\par\medskip
\end{figure*}

\section{Detailed Quantitative Evaluation Results}
In this section, we provide the detailed quantitative evaluation results that serve as the basis for the radar charts presented in Figure \label{radar}. Due to space constraints in the main body, the complete numerical results are presented here to allow for a precise comparison.
Table \ref{detailed-evaluation} provides a detailed breakdown of the scores for each quality metric across the EvalCrafter, T2V-CompBench, and VBench benchmarks. It compares our LP method against other baseline parallelism strategies (NMP, PP, HP), the centralized generation mode, and VideoCrafter for both 49-frame and 81-frame video generation. The metrics evaluated are Subject Consistency, Background Consistency, Temporal Flickering, Motion Smoothness, and Imaging Quality.
A detailed analysis of Table \ref{detailed-evaluation} shows that across all three benchmarks and for both 49- and 81-frame generation, the performance of LP (using either $r=1.0$ or $r=0.5$) consistently aligns with the high-performance cluster formed by other parallelism strategies (NMP, PP, HP) and the Centralized baseline. Crucially, the average quality scores between our LP ($r=1.0$) and the Centralized mode do not differ by more than 0.6\% in any of the tested scenarios. This high-fidelity cluster stands in sharp contrast to the VideoCrafter baseline, which exhibits consistently lower scores across all metrics. 
These results offer more robust numerical evidence supporting our claim that LP achieves substantial reductions in communication overhead with little compromise on video generation quality.

\section{Additional Qualitative Visualization Results}
To intuitively demonstrate the video generation quality of LP strategy, we provide a direct visual comparison between video frames generated by LP, the ideal Centralized mode, and other baseline parallelism strategies (NMP, PP, HP).

As illustrated in Figure \ref{evalcrafter-vis}, \ref{t2v-vis}, and \ref{vbench-vis}, the video frames generated by our LP method are visually indistinguishable from the Centralized mode and other parallelism strategies.
In Figure \ref{evalcrafter-vis}, all methods are compared on the same prompt sampled from EvalCrafter dataset. The LP-generated video maintains perfect subject consistency (the woman and dog), background coherence (the sunset and waves), and temporal smoothness. It flawlessly matches the video generated by the Centralized mode in every observable detail. Key elements, such as the sun's reflection on the wet sand, the natural gait of the dog, and the fine textures of the ocean waves, are all rendered with high fidelity and remain perfectly stable across frames.
Similarly, in Figure \ref{t2v-vis} and Figure \ref{vbench-vis}, the outputs from LP exhibit visual quality comparable with baselines. Figure \ref{t2v-vis} shows the natural, concentric expansion of ripples in water, with realistic reflections and lighting. Figure \ref{vbench-vis} demonstrates a smooth, stable drone shot, capturing the complex geometry of the houses and the landscape during golden hour. In all cases, the visual quality, lighting, and texture details produced by LP are on par with the high-fidelity Centralized results. These qualitative results strongly support our quantitative findings. They confirm that the LP strategy, despite its significant reduction in communication overhead, does not introduce perceptible visual artifacts or quality degradation.

\section{Combining LP with Conventional Parallelism Strategies}
\label{sec:lp_hybrid}
This section describes how the proposed LP can be integrated with existing strategies (e.g., NMP, TP, PP) to construct a hierarchical hybrid parallelism framework for large-scale GPU clusters. The workflow comprises two parts:
\begin{itemize}
  \item Inter-group LP. At the cluster level, all GPUs are first partitioned into groups, and LP is applied across groups by partitioning the latent space.
  \item Intra-group parallelism. Within each group, the corresponding sub-latent is treated as an independent video generation subtask, allowing any model parallelism strategy to be deployed for accelerating the serving process.
\end{itemize}
We will elaborate on these two parts below, followed by a communication 
overhead analysis of the complete framework.

\subsection{Inter-group LP}
Assume the cluster has in total $K$ GPUs formulated as follows:
\begin{equation}
  \mathcal{G} = \{1,2,\dots,K\}, \qquad |\mathcal{G}| = K.
\end{equation}
We partition $\mathcal{G}$ into $M$ disjoint GPU groups, subject to the following constrains:
\begin{equation}
  \begin{cases}
    \mathcal{G}_1, \mathcal{G}_2, \dots, \mathcal{G}_M \subset \mathcal{G}, \\
    \mathcal{G}_m \cap \mathcal{G}_{m'} = \varnothing, \quad \forall m, m' \in \{1,2,\dots,M\} \land m\neq m', \\
    \bigcup_{m=1}^{M}\mathcal{G}_m = \mathcal{G}, \\
    K_m = |\mathcal{G}_m|, \qquad \sum_{m=1}^{M} K_m = K.
  \end{cases}
\end{equation}
In hierarchical hybrid parallelism, we first perform inter-group LP along dimension $d_i$ on $z_t$. Specifically, we define an inter-group latent partition operator $\mathcal{P}^{\text{inter}}_{d_i}$ that maps the full latent into $M$ sub-latents for transfer across GPU groups, where $z_t^{[m]}$ denotes the sub-latent assigned to group $m$. The construction of these sub-latents follows the same patch-aligned overlapping partition strategy described Eqs. (\ref{partition-1}, \ref{partition-2}, \ref{partition-3}, \ref{partition-4}), except that the number of partitions $K$ is replaced by $M$. After intra-group parallel denoising, the position-aware latent reconstruction technique also follows the same principle as described in Eqs. (\ref{reconstruct-1}, \ref{reconstruct-2}, \ref{reconstruct-3}, \ref{reconstruct-4}, \ref{reconstruct-4-5}, \ref{reconstruct-5}, \ref{reconstruct-6}), again replacing the number of partitions from $K$ to $M$.

\subsection{Intra-Group Parallelism}
For each GPU group $\mathcal{G}_m$, we abstract the intra-group parallel execution as the following operator:
\begin{equation}
  \Phi_m : \big(z_t^{[m]}, c, t\big) \;\longmapsto\;
  \tilde{f}^{[m]}(z_t^{[m]}, t, c),
\end{equation}
where $\tilde{f}^{[m]}$ is the local noise prediction function on $z_t^{[m]}$ realized by any chosen model parallelism strategy such as NMP, TP and PP. Therefore, from the perspective of inter-group LP, $\Phi_m$ can be treated as a black-box operator that performs parallel denoising of the sub-latent $z_t^{[m]}$ across $K_m$ GPUs.

\subsection{Communication Overhead of LP-Integrated Hybrid Parallelism}
In LP-integrated hybrid parallelism, the communication overhead at each diffusion timestep $t$ can be decomposed into the following two parts:

\begin{itemize}
  \item Inter-group communication overhead, which comes from scattering and gathering sub-latents across GPU groups;
  \item Intra-group communication overhead, which is jointly determined by the number of GPU groups $M$ and the intra-group parallelism strategy.
\end{itemize}
Thus, the per-step total communication overhead is as follows:
\begin{equation}
  C^{\text{hyb}}_{\text{step}} 
  = C^{\text{inter}}_{\text{step}} + \sum_{m=1}^{M} C^{\text{intra}}_{m,\text{step}},
\end{equation}
and with $T$ diffusion steps, the overall communication overhead can be formulated as:
\begin{equation}
\begin{aligned}
 & \;\;\;\; C^{\text{hyb}}_{\text{total}} \\
&  = T \cdot C^{\text{inter}}_{\text{step}} 
    + \sum_{m=1}^{M} T \cdot C^{\text{intra}}_{m,\text{step}} \\
 & = C^{\text{inter}}_{\text{total}} + \sum_{m=1}^{M} C^{\text{intra}}_{m,\text{total}}.
\end{aligned}
\end{equation}

\noindent
\textbf{Inter-group communication overhead.} The per-step inter-group communication overhead $C_\text{step}^\text{inter}$ is analogous to the overhead of the standard LP strategy analyzed in section \ref{comm-cost-lp}, but applied across $M$ GPU groups instead of $K$ individual GPUs. At each timestep $t$, the master group (e.g., $\mathcal{G}_1$) scatters $M-1$ sub-latents $\{z_t^{[m]}\}_{m=2}^M$ to the other groups. After intra-group parallel denoising, it then gathers the $M-1$ local noise predictions for position-aware latent reconstruction. Let $S_{sub}^{[m]}$ be the size of the sub-latent assigned to group $\mathcal{G}_m$. Following the derivation in Eq. (\ref{step-lp}), the per-step inter-group overhead is:
\begin{equation}
\begin{aligned}
    & \;\;\;\;
    C^{\text{inter}}_{\text{step}} \\
    & = (C^\text{inter}_\text{Scatter} + C^\text{inter}_\text{Gather}) \times 2 \\
    &= \left( \sum_{m=2}^{M} S_{sub}^{[m]} + \sum_{m=2}^{M} S_{sub}^{[m]} \right) \times 2 \\
    &= 4 \sum_{m=2}^{M} S_{sub}^{[m]},
\end{aligned}
\end{equation}
and the total inter-group communication overhead across $T$ steps is:
\begin{equation}
\begin{aligned}
    & \;\;\;\;
    C_\text{total}^\text{inter} \\
    & = T \cdot C_\text{step}^\text{inter} \\
    & = 4T \sum_{m=2}^{M} S_{sub}^{[m]}.
\end{aligned}
\end{equation}

\noindent
\textbf{Intra-group communication overhead.} The per-step intra-group communication overhead $C_{m,\text{step}}^\text{intra}$ is determined by the specific parallelism strategy deployed within a given GPU group $\mathcal{G}_m$. This strategy is used to execute the denoising operator $\Phi_m$ on the sub-latent $z_t^{[m]}$ across $K_m$ GPUs within group $m$. Taking NMP as an example, the per-step intra-group communication overhead for a group can be computed as:
\begin{equation}
    C_{m,\text{step}}^\text{intra} = 2(K_m - 1) \cdot S'_H,
\end{equation}
following the deviation from Eq. (\ref{comm-nmp}). Here, $S'_H$ represents the size of the transferred activation tensor corresponding to the sub-latent $z_t^{[m]}$, which is consistently smaller than $S_H$ (i.e., $S'_H<S_H$).
As a result, the total intra-group communication overhead for group $m$ is:
\begin{equation}
    C_{m,\text{total}}^\text{intra} = T \cdot C_{m,\text{step}}^\text{intra} = 2T(K_m - 1) \cdot S'_H.
\end{equation}
The total communication overhead for the hybrid framework is the sum of the inter-group communication overhead (from LP) and the cumulative intra-group communication overhead from all $M$ groups. Using NMP as the intra-group parallelism strategy, the total overhead is:
\begin{equation}
\begin{aligned}
    & \;\;\;\;C_\text{total}^\text{hyb} \\&= C_\text{total}^\text{inter} + \sum_{m=1}^{M}C_{m,\text{total}}^\text{intra} \\&= \left( 4T\sum_{m=2}^{M}S_{sub}^{[m]} \right) + \left( \sum_{m=1}^{M} 2T(K_{m}-1)S'_{H} \right)\\
    & =2T \sum_{m=1}^{M} \left( 2S_{sub}^{[m]} + (K_{m}-1)S'_{H} \right)-4TS_{sub}^{[1]}.
\end{aligned}
\end{equation}
As $S_{sub}^{[m]}$ is empirically negligible compared with $S'_{H}$, we can therefore approximate the total overhead as:
\begin{equation}
\begin{aligned}
    & \;\;\;\;C_\text{total}^\text{hyb} \\
    &=2T \sum_{m=1}^{M} \left( 2S_{sub}^{[m]} + (K_{m}-1)S'_{H} \right)-4TS_{sub}^{[1]}\\
    &\approx2T \sum_{m=1}^{M} \left((K_{m}-1)S'_{H} \right)\\
    &\approx 2TS'_{H} \sum_{m=1}^{M} (K_{m}-1)\\
    &\approx 2TS'_{H} \left( \left( \sum_{m=1}^{M} K_{m} \right) - M \right).
\end{aligned}
\end{equation}
Since 
\begin{equation}
    \sum_{m=1}^{M} K_{m} = K,
\end{equation}
we have:
\begin{equation}
\begin{aligned}
    & \;\;\;\;C_\text{total}^\text{hyb} \\
    &\approx 2TS'_{H} \left( \left( \sum_{m=1}^{M} K_{m} \right) - M\right)\\
    & \approx 2TS'_{H} (K- M)
\end{aligned}
\end{equation}
We can now compare this to the overhead of a pure NMP derived in Eq. (\ref{nmp-comm-total}). Then, the ratio of their communication overheads is therefore:
\begin{equation}
\begin{aligned}
&\;\;\;\; \frac{C_\text{total}^\text{hyb}}{C_{NMP}} \\&\approx \frac{2TS'_{H} (K - M)}{2TS_{H}(K-1)} \\&< \frac{K - M}{K-1}
\end{aligned}
\end{equation}
This result demonstrates the significant advantage of the hierarchical hybrid approach. By partitioning the $K$ GPUs into $M$ independent groups, the dominant activation-transfer overhead is reduced by at least a factor of $\frac{K - M}{K-1}$. Our preliminary experiments confirm this theoretical upper bound on communication overhead reduction. This reduction occurs because the NMP communication chain is broken into $M$ shorter, parallel chains, and the links between these chains are handled by the highly efficient, low-overhead LP. This allows the hybrid inference system to scale to larger GPU clusters while effectively mitigating the primary communication bottleneck.

\end{document}